\theoremstyle{plain}
\newtheorem{theorem}{Theorem}[section]
\newtheorem{proposition}[theorem]{Proposition}
\newtheorem{lemma}[theorem]{Lemma}
\theoremstyle{definition}
\newtheorem{definition}[theorem]{Definition}
\theoremstyle{remark}
\newtheorem{remark}[theorem]{Remark}
\crefname{section}{Section}{Sections}
\crefname{subsection}{Subsection}{Subsections}
\crefname{equation}{Equation}{Equations}
\crefname{figure}{Figure}{Figures}
\newcommand{\kmax}{\mathop{k\operatorname{-max}}\limits}
\newcommand{\R}{\mathbb{R}}
\newcommand{\N}{\mathbb{N}}
\newcommand{\Z}{\mathbb{Z}}
\newcommand{\E}{\mathbb{E}}
\newcommand{\XX}{\mathcal{X}}
\newcommand{\DD}{\mathcal{D}}
\newcommand{\CC}{\mathcal{C}}
\newcommand{\OO}{\mathcal{O}}
\newcommand{\NN}{\mathcal{N}}
\newcommand{\bP}{\mathbf{P}}
\newcommand{\bQ}{\mathbf{Q}}
\renewcommand{\epsilon}{\varepsilon}
\newcommand{\dd}{\mathrm{d}}  
\newcommand{\defeq}{\vcentcolon=}
\newcommand{\eqdef}{=\vcentcolon}
\newcommand{\ReLU}{\mathrm{ReLU}}
\newcommand{\dgm}{\mathrm{Dgm}}
\DeclareMathOperator*{\op}{\mathbf{op}}
\newcommand{\gudhi}{\href{https://gudhi.inria.fr/}{\texttt{Gudhi}}}
\newcommand{\ucr}{\href{https://www.cs.ucr.edu/~eamonn/time_series_data_2018/}{\texttt{UCR}}}
\newcommand{\modelnet}[1]{\href{https://modelnet.cs.princeton.edu/}{\texttt{ModelNet{#1}}}}
\newcommand{\scikit}{\href{https://scikit-learn.org/stable/}{\texttt{Scikit-Learn}}}
\newcommand{\tensorflow}{\href{https://www.tensorflow.org/}{\texttt{TensorFlow}}}
\newcommand{\velour}{\href{https://pypi.org/project/velour/}{\texttt{Velour}}}
\newcommand{\noise}[1]{\widetilde{#1}}
\newcommand{\dtm}[1]{{#1}^{\rm DTM}}
\newcommand{\PI}{\mathrm{PI}}
\newcommand{\RipsNet}{\mathbf{RN}}
\newcommand{\DeepSet}{\mathbf{DS}}
\newcommand{\knn}{\mathbf{kNN}}
\newcommand{\train}{\mathrm{Tr}}
\newcommand{\test}{\mathrm{Te}}
\newcommand{\class}{\mathrm{Cl}}
\newcommand{\PCs}{\mathrm{PC}}
\newcommand{\PDs}{\mathrm{PD}}
\newcommand{\PV}{\mathrm{PV}}
\newcommand{\Ls}{\mathrm{L}}
\newcommand{\RR}{\mathcal{R}}
\begin{document}

\title{RipsNet: a general architecture for fast and robust estimation of the persistent homology of point clouds}

\author[1]{Thibault de Surrel\footnote{These authors contributed equally to the work.}}
\author[2]{Felix Hensel{$^*$}}
\author[1]{Mathieu Carri\`ere{$^*$}}
\author[3]{Th\'eo Lacombe}
\author[4]{Yuichi Ike}
\author[5]{Hiroaki Kurihara}
\author[2]{Marc Glisse}
\author[2]{Fr\'ed\'eric Chazal}

\affil[1]{Universit{\'e} C{\^o}te d'Azur, Inria, France}
\affil[2]{Universit{\'e} Paris-Saclay, CNRS, Inria, Laboratoire de Math{\'e}matiques d'Orsay, France}
\affil[3]{LIGM, Université Gustave Eiffel, Champs-sur-Marne, France}
\affil[4]{The University of Tokyo, Japan}
\affil[5]{Fujitsu Limited, Japan}

\maketitle

\begin{abstract}
The use of topological descriptors in modern machine learning applications, such as Persistence Diagrams (PDs) arising from Topological Data Analysis (TDA), has shown great potential in various domains.
However, their practical use in applications is often hindered by two major limitations: the computational complexity required to compute such descriptors exactly, and their sensitivity to even low-level proportions of outliers. In this work, we propose to bypass these two burdens in a data-driven setting by entrusting the estimation of (vectorization of) PDs built on top of point clouds to a neural network architecture that we call \emph{RipsNet}. Once trained on a given data set, RipsNet can estimate topological descriptors on test data very efficiently with generalization capacity. Furthermore, we prove that RipsNet is robust to input perturbations in terms of the 1-Wasserstein distance, a major improvement over the standard computation of PDs that only enjoys Hausdorff stability, yielding RipsNet to substantially outperform exactly-computed PDs in noisy settings. We showcase the use of RipsNet on both synthetic and real-world data.
Our open-source implementation is publicly available\footnote[2]{\url{https://github.com/hensel-f/ripsnet}} and will be included in the \gudhi\,library.
\end{abstract}

\section{Introduction}

The knowledge of topological features (such as connected components, loops, and higher dimensional cycles) that are present in data sets provides a better understanding of their structural properties at multiple scales, and can be leveraged to improve statistical inference and prediction.
Topological Data Analysis (TDA) is the branch of data science that aims to detect and encode such topological features, in the form of
\emph{persistence diagrams} (PD). 
A PD is a (multi-)set of points $D$ in $\R^2$, in which each point $p\in D$ corresponds to a topological feature of the data whose size is encoded by its coordinates.
PDs are descriptors of a general nature and allow flexibility in their computation.
As such, they have been successfully applied to many different areas of data science, including, e.g., material science~\cite{Buchet2018}, genomic data~\cite{Camara2017}, and 3D-shapes~\cite{Li2014}. 
In the present work, we focus on PDs stemming from point cloud data, 
referred to as \emph{Rips PDs}, 
which find natural use in shape analysis \cite{chazal2009gromov,gamble2010exploring} but also in other domains such as time series analysis \cite{Perea2015, pereira2015persistent, umeda2017time}, or in the study of the behavior of deep neural networks \cite{guss2018characterizing,naitzat2020topology,birdal2021intrinsic}. 

A drawback of Rips PDs computed on large point clouds is that they are computationally expensive.
Furthermore, even though these topological descriptors enjoy stability properties with respect to the input point cloud in the Hausdorff metric~\cite{chazal2014persistence}, they are fairly sensitive to perturbations: moving a single point in an arbitrarily large point cloud can alter the resulting Rips PD substantially.

In addition, the lack of linear structure (such as addition and scalar multiplication) of the space of PDs 
hinder the use of PDs in standard machine learning pipelines, which are typically developed to handle inputs belonging to a finite dimensional vector space.
This burden motivated the development of \emph{vectorization methods}, which allow to map PDs into a vector space while preserving their structure and interpretability.
Vectorization methods can be divided into two classes, {\em finite-dimensional embeddings}~\cite{Bubenik2015, Adams2017, Carriere2015, Chazal2015, Kalisnik2018},
turning PDs into elements of Euclidean space $\R^d$, and {\em kernels}~\cite{Carriere2017, Kusano2016, Le2018, Reininghaus2015}, 
that implicitly map PDs to elements of infinite-dimensional Hilbert spaces.


In this work, we propose to overcome the previous limitations of Rips PDs, by learning their finite-dimensional embeddings directly from the input point cloud data sets with neural network architectures. This approach allows not only for a much faster computation time, but also for increased robustness 
of the topological descriptors. 

\subsection*{Contributions}

More specifically, our contributions in this work are summarized as follows.
\begin{itemize}[topsep=0pt]
    \item We introduce RipsNet, a DeepSets-like architecture capable of learning finite-dimensional embeddings of Rips PDs built on top of point clouds. 
    \item We study the robustness properties of RipsNet. In particular, we prove that for a given point cloud $X$, perturbing a proportion $\lambda \in (0,1)$ of its points can only change the output of RipsNet by $\OO(\lambda)$, while the exact persistence diagram may change by a fixed positive quantity even in the regime $\lambda \to 0$. 
    \item We experimentally showcase how RipsNet can be trained to produce fast, accurate, and useful estimations of topological descriptors. In particular, we observe that using RipsNet outputs instead of exact PDs yields better performances for classification tasks based on topological properties. 
\end{itemize}

\subsection*{Related work}

\textbf{DeepSets.} Our RipsNet architecture is directly based on \emph{DeepSets} \cite{Zaheer2017}, a particular case of equivariant neural network \cite{cohen2021equivariant} designed to handle point clouds as inputs. Namely, DeepSets essentially consist of processing a point cloud
$X = \{x_1,\dots,x_n\} \subset \R^d$
via
\begin{equation} \label{eq:deepset}
X \mapsto \phi_2 \left( \op(\{\phi_1(x_i)\}_{i=1}^{n}) \right),
\end{equation}
where $\op$ is a \emph{permutation invariant operator} on sets (such as sum, mean, maximum, etc.) and $\phi_1 : \R^d \to \R^{d'}$ and $\phi_2 : \R^{d'} \to \R^{d''}$ are parametrized maps (typically encoded by neural networks) optimized in the training phase. Eq.~\eqref{eq:deepset} makes the output of DeepSets architectures invariant to permutations, a property of Rips PDs that we want to reproduce in RipsNet. 

\textbf{Learning to estimate PDs.} There exist a few works attempting to compute or estimate (vectorizations of) PDs through the use of neural networks. 
In \cite{som2020piNet}, the authors propose a convolutional neural network (CNN) architecture to estimate persistence images (see Section~\ref{sec:persvec}) computed on 2D-images. 
Similarly, in \cite{montufar2020canNNlearnPH}, the authors provide an experimental overview of specific PD features (such as, e.g.,~their tropical coordinates \cite{kalivsnik2019tropical}) that can be learned using a CNN, when PDs are computed on top of 2D-images. 
On the other hand, RipsNet is designed to handle the (arguably harder) situation where input data are point clouds of arbitrary cardinality instead of 2D-images (i.e.,~vectors). 
Finally, the recent work \cite{zhou2021learning} 
also aims at
learning to compute topological descriptors on top of point clouds via a neural network. 
However, note that our methodology is quite different: while our approach based on a DeepSet architecture allows to process point clouds directly, the approach proposed in \cite{zhou2021learning} requires the user to equip the point clouds with graph structures (that depend on hyper-parameters mimicking Rips filtrations). 
Furthermore, a key difference between our approach and the aforementioned works is that we provide a theoretical study of our model that provides insights on its behavior, particularly in terms of robustness to noise, while the other works are mostly experimental.

\section{Background}

\subsection{Persistent homology and Rips PDs}
In this section, we briefly recall the basics of ordinary persistence theory. We refer the interested reader to~\cite{Cohen-Steiner2009, Edelsbrunner2010, Oudot2015} for a thorough treatment.

\textbf{Persistence diagrams.}
Let $\XX$ be a topological space, and $f \colon \XX \to \R$ a real-valued continuous function. The $\alpha$-\emph{sublevel set} of $(\XX,f)$ is defined as 
$\XX_\alpha=\{x \in \XX \,:\, f(x) \leq \alpha \}$.
Increasing $\alpha$ from $-\infty$ to $+\infty$ yields an increasing nested sequence of sublevel sets, called the {\em filtration} induced by $f$. 
It starts with the empty set and ends with the entire space $\XX$. 
Ordinary persistence keeps track of the times of appearance and disappearance of topological features (connected components, loops, cavities, etc.) in this sequence.
For instance, one can store the value $\alpha_b$, called the \emph{birth time}, for which a new connected component appears in $X_{\alpha_b}$. 
This connected component eventually merges
with another one for some value $\alpha_d \geq \alpha_b$, which is stored as well and called the \emph{death time}.
One says that the component \emph{persists} on the corresponding interval $[\alpha_b,\alpha_d]$.
Similarly, we save the $[\alpha_b,\alpha_d]$ values of each loop, cavity, etc.~that appears in a specific sublevel set $X_{\alpha_b}$ and disappears (gets ``filled'') in $X_{\alpha_d}$. 
This family of intervals is called the barcode, or \emph{persistence diagram}, of $(\XX,f)$, and can be represented as a multiset (i.e., elements are counted with multiplicity) of points supported on the open half-plane $\{ (\alpha_b,\alpha_d) \in \R^2:\ \alpha_b < \alpha_d \}\subset \R^2$. 
The information of connected components, loops, and cavities is represented in persistence diagrams of dimension $0$, $1$, and $2$, respectively.

\textbf{Filtrations for point clouds.} Let $X=\{x_{1},\dots,x_{n}\}$ be a finite point cloud in $\XX = \mathbb{R}^d$, and $\alpha \geq 0$. 
Let $f_{X} : v \mapsto \min_{x \in X} \|v-x\|$ denote the distance of $v \in \R^d$ to $X$. 
In this case, the sublevel set $\XX_{\alpha}$ is given by the union of $d$-dimensional closed balls of radius $\alpha$ centered at $x_{i}$ ($i=1,\dots,n$).
From this filtration, different types of PDs can be built, called \v Cech, Rips, and alpha filtrations, which can be considered as being equivalent for the purpose of this work (in particular, RipsNet can be used seamlessly with any of these choices)---we refer to Appendix~\ref{app:filtrations} for details. 
Due to its computational efficiency in low-dimensional settings, we use the alpha filtration in our numerical experiments.


\textbf{Metrics between persistence diagrams.} The space of persistence diagrams can be equipped with a parametrized metric $d_s$, $1 \leqslant s \leqslant \infty$ which is rooted in algebraic considerations and whose proper definition is not required in this work. 
In the particular case $s = \infty$, this metric will be referred to as the \emph{bottleneck} distance between persistence diagrams.
Of importance is, that the space of persistence diagrams $\DD$ equipped with such metrics lacks linear (Hilbert; Euclidean) structure \cite{carriere2018metric, bubenik2020embeddings}. 

\subsection{Vectorizations of persistence diagrams}\label{sec:persvec}

The lack of linear structure of the metric space $(\DD, d_s)$ prevents a faithful use of persistence diagrams in standard machine learning pipelines, as such techniques typically require inputs belonging to a finite-dimensional vector space.
A natural workaround is thus to seek for a \emph{vectorization} of persistence diagrams ($\PV$), that is a map $\phi : (\DD, d_s) \to (\R^d, \| \cdot \|)$ for some dimension $d$. 
Provided the map $\phi$ satisfies suitable properties (e.g.,~being Lipschitz, injective, etc.), one can turn a sample of diagrams
$\{D_1,\dots, D_n\} \subset \DD$
into a collection of vectors
$\{\phi(D_1),\dots,\phi(D_n)\} \subset \R^d$
which can subsequently be used 
to perform any 
machine learning task.

Various vectorizations techniques, with success in applications, have been proposed
\cite{Carriere2015, Chazal2015, Kalisnik2018}. 
In this work, we focus, for the sake of concision, on two of them: the \emph{persistence image} (PI)~\cite{Adams2017} and the \emph{persistence landscape} (PL)~\cite{Bubenik2015}---though the approach developed in this work adapts faithfully to any other 
vectorization. 

\textbf{Persistence images (PI).} Given a persistence diagram $D$, computing its persistence image essentially boils down to putting a Gaussian
\[
    g_u(z) \coloneqq \frac{1}{2\pi\sigma^2} \exp\left( -\frac{\|z-u\|^2}{2\sigma^2} \right),
\]
with fixed variance $\sigma^2$, on each of its points $u$ and weighing it by a piecewise differentiable function $w : \R^2 \to \R_{\geq 0}$ (typically a function of the distance of $u$ to the diagonal $\{(t,t)\} \subset \R^2$) and then discretizing the resulting surface on a fixed grid to obtain an image. 
Formally, one starts by rotating the diagram $D$ via the map $T \colon \R^2 \to \R^2, (b,d) \mapsto (b,d-b)$.
The \emph{persistence surface} of $D$ is defined as 
\[
    \rho_D(z) 
    \coloneqq 
    \sum_{u \in T(D)} w(u) g_u(z),
\]
where $w$ satisfies $w(x,0)=0$.
Now, given a compact subset $A \subset \R^2$ partitioned into domains  $A=\bigsqcup_{i=1}^k P_i$---in practice a rectangular grid regularly partitioned in $(n \times n)$ pixels---we set $I(\rho_D)_P \coloneqq \int_P \rho_D dz$.
The vector $(I(\rho_D)_{P_i})_{i=1}^n$ is the persistence image of $D$. 
The transformation $\PI \colon X \mapsto \dgm(X) \mapsto I(\rho_{\dgm(X)})$ defines a finite-dimensional vectorization as illustrated in \cref{fig:pipeline_pi}.


\begin{figure}
    \centering
    \includegraphics[width=\columnwidth]{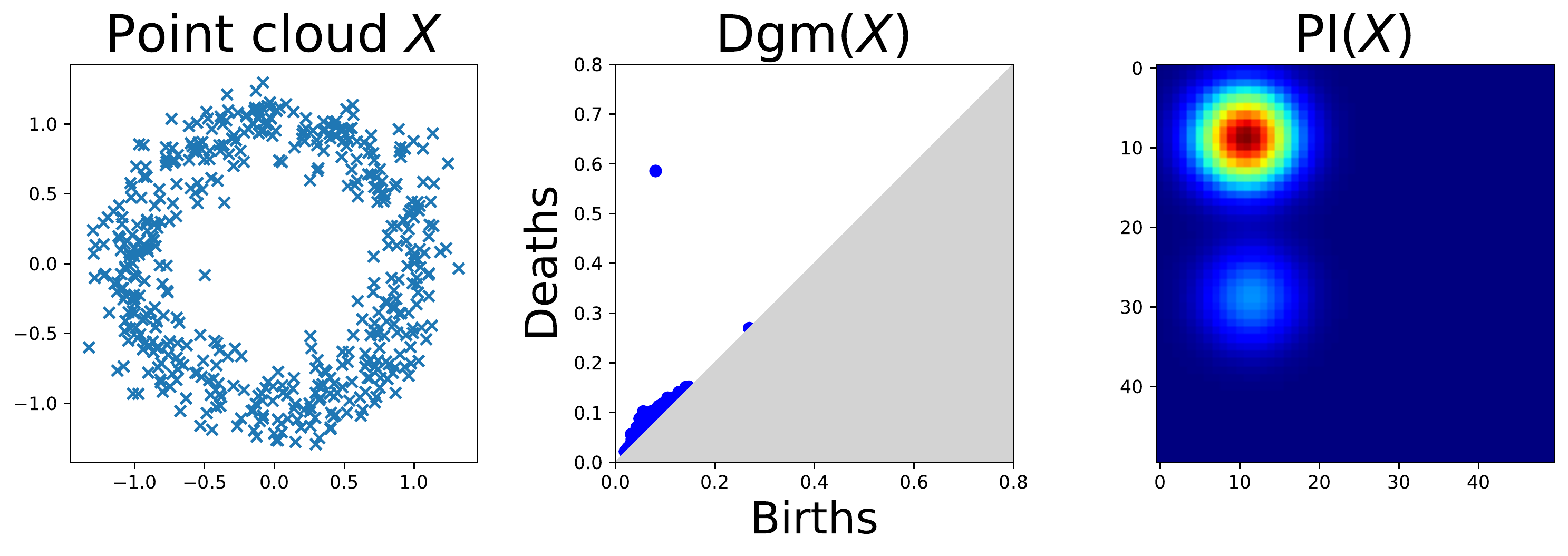}
    \caption{
    Pipeline to extract persistence images from point clouds. The prominent bump in $\PI(X)$ corresponds to the distinguished point in $\dgm(X)$ accounting for the circle underlying the point cloud $X$. The smaller bump accounts for smaller loops inferred when filtering $X$, which yield points near the diagonal in $\dgm(X)$ that are downweighted by $w(x,y)=\tanh(y)$ in $\PI(X)$.}
    \label{fig:pipeline_pi}
\end{figure}

\textbf{Persistence landscapes (PL).} \cite{Bubenik2015} Given a persistence diagram $D=\{ (b_i,d_i) \}_{i=1}^n$, we define a function $\lambda_k(D) \colon \R \to \R_{\ge 0}$ by
\[
\lambda_k(D)(t)
    =
    \kmax_i \min\{t-b_i,d_i-t\}_+,
\]
for each $k\in\Z_{\geq 1}$, where $\kmax$ denotes the $k$-th largest value in the set or $0$ if the set contains less than $k$ points, and $a_+=\max\{0,a\}$ for a real number $a$.
The sequence of functions $(\lambda_k(D))_{k=1}^\infty$ is called the \emph{persistence landscape} of the persistence diagram $D$. In practice, these landscape functions are evaluated on a 1-D grid, and the corresponding values are concatenated into a vector.

\section{RipsNet}

\subsection{Motivation and definition}
In the pipeline illustrated in \cref{fig:pipeline_pi}, the computation of the persistence diagram $\dgm(X)$ from an input point cloud $X$ is the most complex operation involved: it is both computationally expensive and introduces non-differentiability in the pipeline. 
Moreover, as detailed in \cref{subsec:instability_std_rips} for the specific case of persistence images, the output vectorization can be highly sensitive to perturbations in the input point cloud $X$: moving a single point $p_i \in X$ can arbitrarily change $\PI(X)$ even when $n$ is large. 
This instability can be a major limitation when incorporating persistence vectorizations ($\PV$s) of diagrams in practical applications. 


To overcome these difficulties, we propose a way to bypass this computation by designing a neural network architecture which we call \emph{RipsNet} ($\RipsNet$).
The goal is to learn a function, denoted by $\RipsNet$ as well, able to reproduce persistence vectorizations for a given distribution of input point clouds $X \sim \bP$ after being trained on a sample $\{X_i\}_{i=1}^n$
with labels being the corresponding vectorizations
$\{\PV(X_i)\}_{i=1}^n$.

As $\RipsNet$ takes point clouds $X = \{x_1,\dots,x_N\} \subset \R^d$ of potentially varying sizes as input, it is natural to expect it to be \emph{permutation invariant}. 
An efficient way to enforce this property is to rely on a DeepSets architecture \cite{Zaheer2017}. 
Namely, it consists of decomposing the network into a succession of two maps $\phi_1 \colon \R^d \to \R^{d'}$ and $\phi_2 \colon \R^{d'} \to \R^{d''}$ and a permutation invariant operator $\op$---typically the sum, the mean, or the maximum
\[
    \RipsNet \colon X \mapsto \phi_2 \left( \op(\{  \phi_1(x)\}_{x\in X})  \right).
\]
For each $x \in X$, the map $\phi_1$ provides a representation $\phi_1(x)$; these point-wise representations are gathered via the permutation invariant operator $\op$, and the map $\phi_2$ is subsequently applied to compute the network output. 

In practice, $\phi_1$ and $\phi_2$ are themselves parameterized by neural networks; in this work, we will consider simple feed-forward fully-connected networks (see \cref{sec:expe} for the architecture hyper-parameters), though more general architectures could be considered. 
The parameters characterizing $\phi_1$ and $\phi_2$ are tuned during the training phase, where we minimize the $L2$-loss
\begin{equation}\label{eq:objective_function}
    \sum_{i=1}^n \| \RipsNet(X_i) - \PV(X_i)\|^2,
\end{equation}
over a set of training point clouds $\{X_i\}_i$ with corresponding pre-computed vectorizations $\{\PV(X_i)\}_i$. 

Once trained properly (assuming good generalization properties), when extracting topological information of a point cloud, an important advantage of using $\RipsNet$ instead of $\PV$ lies in the computational efficiency: while the exact computation of persistence diagrams and vectorizations rely on expensive combinatorial computations, running the forward pass of a trained network is significantly faster, as showcased in \cref{sec:expe}. 
As detailed in the following subsection and illustrated in our experiments, $\RipsNet$ also satisfies some strong robustness properties, yielding a substantial advantage over exact $\PV$s when the data contain some perturbations such as noise, outliers, or adversarial attacks.

\subsection{Wasserstein stability of RipsNet}
In this subsection, we show that RipsNet satisfies robustness properties.
A convenient formalism to demonstrate these properties is to represent a point cloud $X = \{x_1,\dots,x_N\}$ by a probability measure $m_X \defeq \frac{1}{N} \sum_{i=1}^N \delta_{x_i}$, where $\delta_{x_i}$ denotes the Dirac mass located at $x_i \in \R^d$. 
Let $\mu(f)$ denote $\int f \dd \mu$ for a map $f$ and a probability measure $\mu$. 
Such measures can be compared by Wasserstein distances $W_p,\ p \geq 1$, which are defined for any two probability measures $\mu,\nu$ supported on a compact subset $\Omega \subset \R^d$ as
\[
    W_p(\mu,\nu) \defeq \left(\inf_{\pi} \iint \|x-y\|^p \dd \pi(x,y) \right)^{\frac{1}{p}},
\]
where the infimum is taken over measures $\pi$, supported on $\R^d \times \R^d$, with marginals $\mu$ and $\nu$. 
We also mention the so-called Kantorovich--Rubinstein duality formula that occurs when $p=1$:
\begin{equation}\label{eq:duality}
    W_1(\mu,\nu) = \sup_{f:\text{$1$-Lip}} \left(\int f \dd\mu - \int f \dd\nu\right).
\end{equation}

Throughout this section, we fix $\op$ as the mean operator:  $\op(\{y_1,\dots,y_N\}) = \frac{1}{N}\sum_{i=1}^N y_i$, and let $\RipsNet = \phi_2 \circ \op \circ \phi_1$, where $\phi_1,\phi_2$ are two Lipschitz-continuous maps with Lipschitz constant $C_1,C_2$, respectively.

\subsubsection{Pointwise stability}
If $X = \{x_1,\dots, x_{N-1}, x_N\} \subset \Omega$ and $X' = \{x_1,\dots, x_{N-1}, x'_N\} \subset \Omega$, it is worth noting that $W_1(m_X, m_{X'}) \leq \frac{1}{N} \|x_N - x'_N\|$. 
Therefore, moving a single point $x_N$ of $X$ to another location $x'_N$ changes the $W_1$ distance between the two measures by at most $\OO(1/N)$. 
More generally, moving a fraction $\lambda \in (0,1)$ of the points in $X$ affects the Wasserstein distance in $\OO(\lambda)$.
$\RipsNet$ satisfies the following stability result.

\begin{proposition}
\label{prop:ripsnet_pw_stability}
For any two point clouds $X,Y$, and any $p \geq 1$, one has 
\begin{align*} 
\| \RipsNet(X) - \RipsNet(Y) \| &\leq C_1 C_2 \cdot W_1(m_X, m_Y) \\
&\leq C_1 C_2 \cdot W_p(m_X,m_Y).
\end{align*}
\end{proposition}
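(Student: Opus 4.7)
The plan is to use the decomposition $\RipsNet = \phi_2 \circ \op \circ \phi_1$ and reduce the statement to Kantorovich--Rubinstein duality, which is tailor-made for bounding differences of integrals of Lipschitz functions by the $W_1$ distance.

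First I would rewrite the mean-aggregated quantity as an integral against the empirical measure: for any point cloud $X = \{x_1, \dots, x_N\}$,
\[
\op(\{\phi_1(x)\}_{x \in X}) = \frac{1}{N}\sum_{i=1}^N \phi_1(x_i) = \int \phi_1 \, \dd m_X = m_X(\phi_1),
\]
so that $\RipsNet(X) = \phi_2(m_X(\phi_1))$. Applying $C_2$-Lipschitzness of $\phi_2$ immediately yields
\[
\|\RipsNet(X) - \RipsNet(Y)\| \le C_2 \cdot \|m_X(\phi_1) - m_Y(\phi_1)\|,
\]
which reduces the problem to controlling the vector-valued quantity $m_X(\phi_1) - m_Y(\phi_1) \in \R^{d'}$ by $C_1 \cdot W_1(m_X, m_Y)$.

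The main step is to go from the scalar-valued Kantorovich--Rubinstein duality in \eqref{eq:duality} to a vector-valued statement. The natural trick is to test against unit vectors: for any $u \in \R^{d'}$ with $\|u\|=1$, the scalar map $x \mapsto \langle u, \phi_1(x)\rangle$ is $C_1$-Lipschitz (since $\phi_1$ is $C_1$-Lipschitz and $\|u\|=1$), hence $f \defeq \langle u, \phi_1(\cdot)\rangle / C_1$ is $1$-Lipschitz. Applying \eqref{eq:duality} gives
\[
\langle u, \, m_X(\phi_1) - m_Y(\phi_1)\rangle = C_1 \left(\int f \, \dd m_X - \int f \, \dd m_Y\right) \le C_1 \cdot W_1(m_X, m_Y).
\]
Taking the supremum over $\|u\|=1$ yields $\|m_X(\phi_1) - m_Y(\phi_1)\| \le C_1 \cdot W_1(m_X, m_Y)$, and combining with the earlier Lipschitz estimate on $\phi_2$ gives the first inequality of the proposition.

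The second inequality $W_1(m_X, m_Y) \le W_p(m_X, m_Y)$ for $p \ge 1$ is a standard monotonicity property of Wasserstein distances: it follows from Jensen's inequality applied to any optimal coupling $\pi$ of $(m_X, m_Y)$ (the function $t \mapsto t^p$ is convex on $\R_{\ge 0}$, so $\int \|x-y\| \, \dd\pi(x,y) \le (\int \|x-y\|^p \, \dd\pi)^{1/p}$ since $\pi$ is a probability measure), and I would simply cite this. The only subtle point in the argument is the passage from scalar to vector-valued duality, but this is routine once one notes that the Euclidean norm is recovered as a supremum of linear forms indexed by the unit sphere; the rest consists of chaining Lipschitz estimates.
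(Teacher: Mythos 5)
Your proof is correct and follows essentially the same route as the paper: write $\RipsNet(X)=\phi_2(m_X(\phi_1))$, use the $C_2$-Lipschitz bound, invoke Kantorovich--Rubinstein duality to get the $C_1 W_1$ factor, and conclude $W_1 \le W_p$ via Jensen's inequality on an optimal coupling. The only difference is that you make the passage from the scalar duality formula to the vector-valued quantity $m_X(\phi_1)-m_Y(\phi_1)$ explicit by testing against unit vectors, a detail the paper's proof glosses over when it writes $\sup_{f:1\text{-Lip}}\|m_X(f)-m_Y(f)\|$ directly.
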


\begin{proof}
We have 
\begin{align*}
    \| \RipsNet(X) - \RipsNet(Y) \| & = \| \phi_2(m_X(\phi_1)) - \phi_2(m_Y(\phi_1)) \| \\
    & \leq C_2 \| m_X(\phi_1) - m_Y(\phi_1) \| \\
    & \leq C_2 C_1 \sup_{f: \text{$1$-Lip}} \| m_X(f) - m_Y(f)\| \\
    & \leq C_2 C_1 W_1(m_X,m_Y),
\end{align*}
where we used \eqref{eq:duality} and we conclude $W_1 \leq W_p$ using Jensen's inequality. 
\end{proof}

In particular, this result implies that moving a small proportion of points $\lambda$ in $\Omega$ in a point cloud $X$ does not affect the output of $\RipsNet$ by much. 
We refer to it as a ``pointwise stability'' result in the sense that it describes how $\RipsNet$ is affected by perturbations of a fixed point cloud $X$. 

Note that in contrast, Rips PDs, as well as their vectorizations, are not robust to such perturbations: moving a single point of $X$, even in the regime $\lambda \to 0$, may change the resulting persistence diagram by a fixed positive amount, preventing a similar result to hold for $\PV$s. 
A concrete example of this phenomenon is given in \cref{subsec:instability_std_rips} for the case of persistence images. 

\subsubsection{Probabilistic stability}
\label{subsec:noise}
The pointwise stability result of \cref{prop:ripsnet_pw_stability} can be used to obtain a good theoretical understanding on how RipsNet behaves in practical learning settings. 
For this, we consider the following model: let $P$ be a law on some compact set $\Omega \subset \R^d$, fix $N \in \N$, and let $\bP$ denote $P^{\otimes N}$, that is, $X \sim \bP$ is a random point cloud $X = \{x_1,\dots, x_N\}$ where the $x_i$'s are i.i.d. $\sim P$.

In practice, given a training sample $X_1,\dots,X_n \sim \bP$, RipsNet is trained to minimize the empirical risk 
\[
    \widehat{\RR}_n \defeq \frac{1}{n} \sum_{i=1}^n \| \RipsNet(X_i) - \PV(X_i)\|,
\]
which, hopefully, yields a small theoretical risk:
\[ 
    \RR \defeq \int \| \RipsNet(X) - \PV(X)\| \dd \bP(X).
\]

\begin{remark}
The question to know whether ``$\widehat{\RR}_n$ small $\Rightarrow$ $\RR$ small'' is related to the capacity of RipsNet to generalize properly. 
Providing a theoretical setting where such an implication should hold is out of the scope of this work, but can be checked empirically by looking at the performances of RipsNet on validation sets.
\end{remark}

We now consider the following noise model: given a point cloud $X \sim \bP$, we randomly replace a fraction $\lambda = \frac{N-K}{N} \in (0,1)$ of its points\footnote{As the $x_i$'s are i.i.d., we may assume without loss of generality that the last $N-K$ points are replaced.} by corrupted observations distributed with respect to some law $Q$. 
Let $Y \sim Q^{\otimes N-K} \eqdef \bQ$ and $F(X,Y)$ denote this corrupted point cloud.

\begin{lemma}
\label{lemma:kopkop}
Let $C(P,Q) \defeq \E_{P \otimes Q}[\|x - y\|]$. Then,
\[ \E_{\bP \otimes \bQ} [W_1(F(X,Y), X)] \leq \lambda C(P,Q). \]
In particular, if $P,Q$ are supported on a compact set $\Omega \subset \R^d$ with diameter $\leq L$, the bound becomes $\lambda L$. 
\end{lemma}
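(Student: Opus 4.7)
The plan is to bound $W_1(m_{F(X,Y)}, m_X)$ pathwise by exhibiting an explicit (sub-optimal) transport plan, and then take expectation.

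First I would construct the plan. Writing $X = \{x_1,\dots,x_N\}$ and $Y = \{y_1,\dots,y_{N-K}\}$, the corrupted cloud is $F(X,Y) = \{x_1,\dots,x_K, y_1,\dots,y_{N-K}\}$, and both empirical measures assign mass $1/N$ to each atom. A natural coupling $\pi$ between $m_{F(X,Y)}$ and $m_X$ is the identity on the first $K$ atoms (sending $\delta_{x_i}/N$ to itself for $i \leq K$), and matches the corrupted atom $y_j$ with the original atom $x_{K+j}$ it replaced, for $j = 1,\dots,N-K$, each with mass $1/N$. This is admissible (both marginals are correct), so
\[
W_1(m_{F(X,Y)}, m_X) \leq \iint \|u-v\|\,\dd\pi(u,v) = \frac{1}{N}\sum_{j=1}^{N-K} \|x_{K+j} - y_j\|.
\]

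Second, I would take expectation. Since the $x_{K+j}$'s are i.i.d.\ from $P$ and the $y_j$'s are i.i.d.\ from $Q$ and independent of one another, each term satisfies $\E_{\bP \otimes \bQ}[\|x_{K+j}-y_j\|] = \E_{P \otimes Q}[\|x-y\|] = C(P,Q)$. Summing,
\[
\E_{\bP \otimes \bQ}[W_1(F(X,Y),X)] \leq \frac{N-K}{N}\,C(P,Q) = \lambda\, C(P,Q).
\]

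For the compact-support refinement, if $\supp(P) \cup \supp(Q) \subset \Omega$ with $\diam(\Omega) \leq L$, then $\|x-y\| \leq L$ for $(x,y)$ in the support of $P\otimes Q$, so $C(P,Q) \leq L$ and the bound becomes $\lambda L$.

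I do not anticipate any substantial obstacle: the argument is entirely structural, relying only on the fact that $W_1$ is defined as an infimum over couplings (so any explicit coupling upper-bounds it) and on Fubini/linearity of expectation. The only minor care needed is to make sure the reduction mentioned in the paper's footnote (that we may assume the last $N-K$ points are the ones being replaced, by exchangeability of the $x_i$'s) is invoked, so that the pairing between $y_j$ and $x_{K+j}$ is indeed an independent draw from $P \otimes Q$.
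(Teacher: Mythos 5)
Your proposal is correct and follows essentially the same route as the paper's proof: exhibit the sub-optimal coupling that fixes the $K$ untouched points and pairs each corrupted point $y_j$ with the original $x_{K+j}$ it replaced, bound $W_1$ by its cost $\frac{1}{N}\sum_j \|x_{K+j}-y_j\|$, and take expectation using independence to obtain $\lambda\,C(P,Q)$, with the diameter bound giving $\lambda L$. Your explicit invocation of exchangeability to justify the ``last $N-K$ points'' reduction matches the paper's footnote, so there is nothing to add.
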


\begin{proof}
Set $X = \{x_1,\dots,x_N\}$ and $Y = \{y_{K+1},\dots,y_N\}$. 
Assume without loss of generality that $F(X,Y) = \{x_1,\dots, x_K, y_{K+1},\dots, y_N\}$, where $K = (1-\lambda) N$. 
Let us consider the transport plan that does not move the first $K$ points, and transports $V = \{ y_{K+1},\dots, y_N \}$ toward $ U = \{x_{K+1},\dots, x_N\}$ using the coupling $x_i \leftrightarrow y_i$. 
As this transport plan is sub-optimal, we have
\[ W_1(F(X,Y), X) \leq \frac{1}{N} \sum_{i=K+1}^N \|x_i - y_i\|. \]

Hence
\begin{align*} 
    & \E_{\bP \otimes \bQ}[W_1(F(X,Y),X)] \\
    = {} &\int W_1(F(X,Y),X) \dd \bP(X) \dd \bQ(Y) \\
    \leq {} &\frac{1}{N} \int \sum_{i=K+1}^N \|x_i - y_i\| \dd P(x_i) \dd Q(y_i) \\
    \leq {} & \frac{1}{N} \sum_{i=K+1}^N \int \|x_i - y_i\| \dd P(x_i) \dd Q(y_i) \\
    \leq {} & \frac{1}{N} \sum_{i=K+1}^N \E_{P \otimes Q}\|x - y\| \\
    \leq {} &\lambda \cdot C(P,Q),
\end{align*}
as claimed.
\end{proof}

We can now state the main result of this section.

\begin{proposition}\label{prop:main}
One has
\begin{align*}
    &\int \| \RipsNet(F(X,Y)) - \PV(X) \| \dd \bP(X) \dd \bQ(Y) \\
    \leq {} &\lambda C_1 C_2 \cdot C(P,Q) + \RR. 
\end{align*}
In particular, if $P,Q$ are supported on a compact subset of $\R^d$ with diameter $\leq L$, one has 
\[ \int \| \RipsNet(F(X,Y)) - \PV(X) \| \dd \bP(X) \dd \bQ(Y) \leq \OO(\lambda + \RR). \]
\end{proposition}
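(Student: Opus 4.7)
The plan is to combine the two building blocks already established, namely the pointwise stability of $\RipsNet$ (\cref{prop:ripsnet_pw_stability}) and the expected $W_1$-displacement bound of the corruption map (\cref{lemma:kopkop}), via a single triangle inequality inside the integral.

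More precisely, I would first insert $\RipsNet(X)$ as a pivot and write, for every realization $(X,Y)$,
\begin{align*}
    \| \RipsNet(F(X,Y)) - \PV(X) \|
    \leq {} & \| \RipsNet(F(X,Y)) - \RipsNet(X) \| \\
    & + \| \RipsNet(X) - \PV(X) \|.
\end{align*}
Then I would apply \cref{prop:ripsnet_pw_stability} to the first summand, which yields the pointwise bound $\| \RipsNet(F(X,Y)) - \RipsNet(X) \| \leq C_1 C_2 \cdot W_1(m_{F(X,Y)}, m_X)$. The second summand is left untouched at this stage.

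Next I would integrate both sides against $\dd\bP(X)\dd\bQ(Y)$ and handle the two resulting terms separately. For the first term, Fubini together with \cref{lemma:kopkop} immediately gives
\[
    C_1 C_2 \int W_1(m_{F(X,Y)}, m_X) \dd \bP(X) \dd \bQ(Y) \leq \lambda C_1 C_2 \cdot C(P,Q).
\]
For the second term, the integrand does not depend on $Y$, so integration against $\bQ$ is trivial and one is left with $\int \| \RipsNet(X) - \PV(X) \| \dd \bP(X) = \RR$ by definition. Adding these two bounds gives the main inequality.

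For the ``In particular'' statement, when $P,Q$ are supported on a set of diameter at most $L$, I would simply observe that $\|x-y\| \leq L$ for all $x,y$ in the support, so $C(P,Q) = \E_{P\otimes Q}[\|x-y\|] \leq L$, whence $\lambda C_1 C_2 C(P,Q) + \RR \leq \lambda C_1 C_2 L + \RR = \OO(\lambda + \RR)$ with the constants $C_1,C_2,L$ absorbed into the $\OO(\cdot)$ notation. There is really no hard step here: the proof is essentially a bookkeeping exercise, and the only point requiring mild care is to make sure the triangle inequality is inserted before integration so that the pointwise bound from \cref{prop:ripsnet_pw_stability} can be fed into \cref{lemma:kopkop} cleanly.
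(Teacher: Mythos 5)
Your proposal is correct and follows essentially the same route as the paper's own proof: a triangle inequality with $\RipsNet(X)$ as pivot, \cref{prop:ripsnet_pw_stability} applied to the perturbation term, \cref{lemma:kopkop} to bound the integrated $W_1$ displacement by $\lambda C(P,Q)$, and the definition of $\RR$ for the remaining term, with $C(P,Q)\leq L$ giving the ``in particular'' statement. Nothing is missing.
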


\begin{proof}
By \cref{prop:ripsnet_pw_stability}, we have
\begin{align*}
    &\int \| \RipsNet(F(X,Y)) - \PV(X) \| \dd \bP(X) \dd \bQ(Y) \\ 
    \leq {} &  \int \| \RipsNet(F(X,Y)) - \RipsNet(X) \| \dd \bP(X) \dd \bQ(Y) \\
    & \quad + \int \| \RipsNet(X) - \PV(X) \| \dd \bP(X) \\
    \leq {} & C_1 C_2 \int W_1(F(X,Y), X) \dd \bP(X) \dd \bQ(Y) + \RR \\
    \leq {} & C_1 C_2 \E_{\bP \otimes \bQ}[W_1(F(X,Y), X)] + \RR
\end{align*}
and we conclude using \cref{lemma:kopkop}.
\end{proof}

Therefore, if RipsNet achieves a low theoretical test risk ($\RR$ small) and only a small proportion $\lambda$ of points is corrupted, RipsNet will produce outputs similar, in expectation, to the persistence vectorizations $\PV(X)$ of the clean point cloud.

\subsection{Instability of standard Rips persistence images} \label{subsec:instability_std_rips}
Here we show that persistence images built on Rips diagrams do not satisfy a similar stability result. 
Namely, the idea is to replace the ``estimator'' $\RipsNet$ in the above section by the exact oracle $\PI$, for which $\RR = 0$, and to prove that
\[ 
    \int \| \PI(F(X,Y)) - \PI(X)\| \dd \bP \dd \bQ \not\to 0 
\]
in the regime $\lambda \to 0$ for some choice of underlying measures $P, Q$.

We consider the following setting:
\begin{itemize}[topsep=0pt]
    \item Let $P$ be the uniform distribution on a circle in $\R^2$ centered at $0$ with radius $1$. 
    \item Let $Q$ be the Dirac mass on $0$. 
    \item Fix $K=N-1$, that is, we move a single point of $X \sim \bP = P^{\otimes N}$ to $\bQ = Q = \delta_0$,  hence $\lambda = 1/N$.
    \item We consider persistence diagrams of dimension $1$, which represent loops in point clouds, and fix the variance $\sigma^2$ of the Gaussian used for the $\PI$ construction. 
\end{itemize}


In the regime $\lambda \to 0$, that is, $N \to \infty$, we have 
\[
    \PI(X) \to \NN((0,1),\sigma^2) \eqdef g_1
\]
almost surely. 
On the other hand, we have 
\[
    \PI(F(X,Y)) \to \NN\left(\left(0,1/2\right),\sigma^2 \right) \eqdef g_2
\]
almost surely, hence
\[ 
    \int \| \PI(F(X,Y)) - \PI(X)\| \dd \bP \dd \bQ \to \| g_1 - g_2 \|_1 > 0, 
\]
which proves the claim.


\section{Numerical experiments}
\label{sec:expe}
In this section, we illustrate the properties of our general architecture $\RipsNet$ presented in the previous sections. The approach we use is the following: we first train an $\RipsNet$ architecture on a training data set $\train_1$, comprised of point clouds $\PCs_1$ with
their corresponding labels $\Ls_1$ and persistence vectorizations $\PV_1$. Note that this training step does not require the labels $\Ls_1$ of the point clouds since the targets are the persistence vectorizations $\PV_1$. Then, we use both $\RipsNet$ and $\gudhi$ to compute the persistence vectorizations of three data sets: a second training data set $\train_2=(\PCs_2,\Ls_2)$, a test data set $\test=(\PCs,\Ls)$, and a noisy test data set $\noise{\test}=(\noise{\PCs}, \noise{\Ls})$ (as per our noise model explained in Section~\ref{subsec:noise}). All three data sets are comprised of labeled point clouds only.  At this stage, we also measure the computation time of $\RipsNet$ and $\gudhi$ for generating these persistence vectorizations.

In order to obtain quantitative scores, we finally train two machine learning classifiers: one on the labeled $\RipsNet$ persistence diagrams from $\train_2$, and the other on the labeled $\gudhi$ persistence diagrams from $\train_2$, which we call $\class_{\RipsNet}$ and $\class_{\gudhi}$, respectively.
The classifiers $\class_{\RipsNet}$ and $\class_{\gudhi}$ are then evaluated on the test persistence diagrams computed with $\RipsNet$ and $\gudhi$, respectively, on both $\test$ and $\noise{\test}$.
See Figure~\ref{fig:expe_method} for a schematic overview.

Finally, we also generate scores using AlphaDTM-based filtrations~\cite{Anai2019} computed with $\gudhi$ and the Python package~$\velour$ with parameters $m=5\%$ (respectively $m=0.75\%$ for the 3D-shape experiments), $p=2$, in the exact same way as we did for $\gudhi$. We let $\dtm{\gudhi}$ and $\class_{\dtm{\gudhi}}$ denote the corresponding model and classifier, respectively. 
Note that AlphaDTM-based filtrations usually require manual tuning, which, contrary to $\RipsNet$ parameters, cannot be optimized during training. 
In our experiments, we manually designed those parameters so that they provide reasonably-looking persistence vectorizations.
Finally, note that we also added some (non-topological) baselines in each of our experiments to provide a sense of what other methods
are capable of.
However, our main purpose is to show that $\RipsNet$ can provide a much faster and more noise-robust alternative to 
$\gudhi$, and hence the comparison we are most interested in is between $\RipsNet$ and both $\gudhi$ and $\dtm{\gudhi}$.

\begin{figure}[t]
    \centering
    \includegraphics[width=0.75\textwidth]{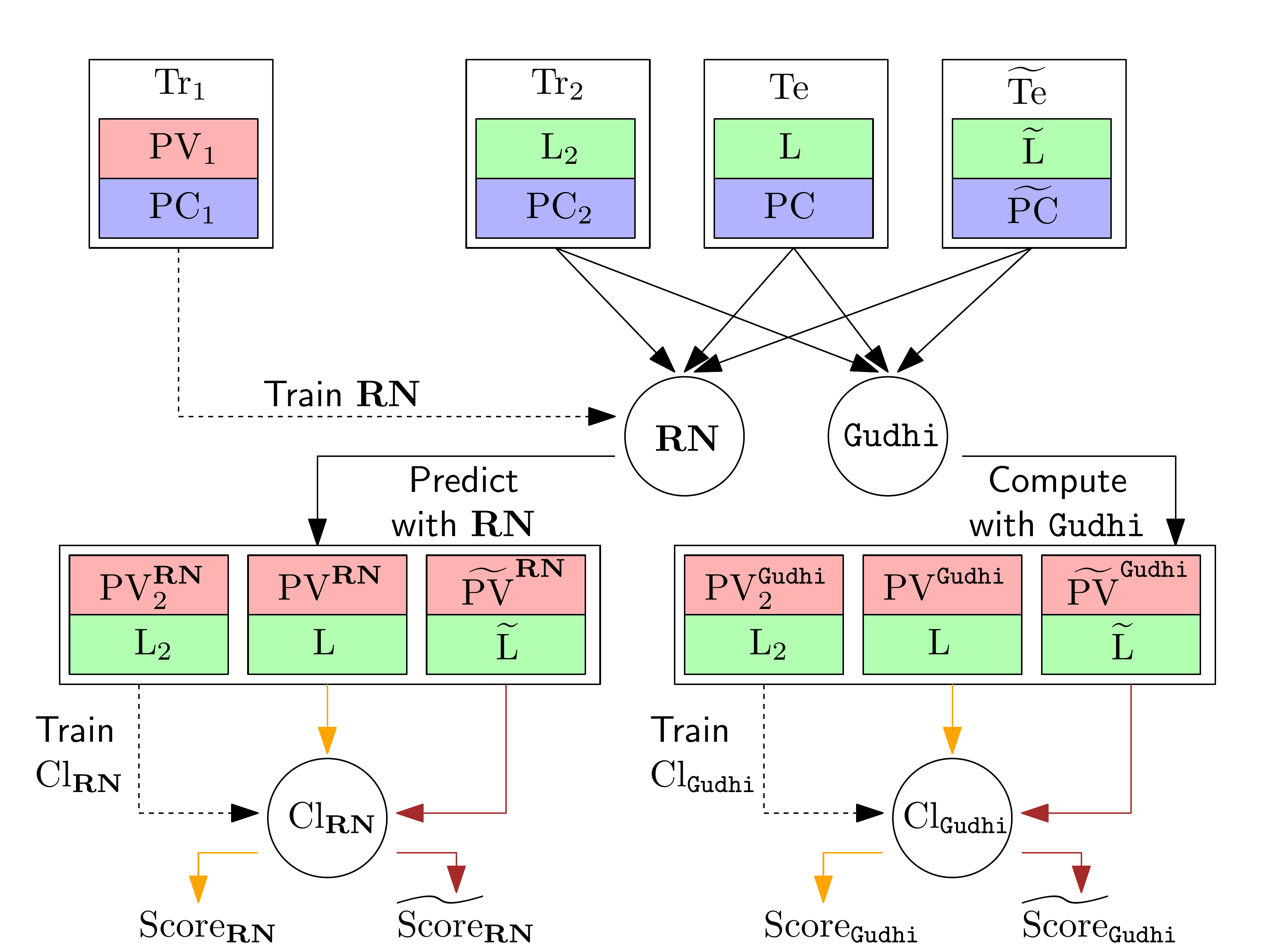}
    \caption{Scheme of our general experimental setup.}
    \label{fig:expe_method}
\end{figure}

In the following, we apply our experimental setup to three types of data.
First, we focus on synthetic data generated by a simple generative model.
Next, we consider time series data obtained from the $\ucr$ archive~\cite{UCR2018}, as well as 3D-shape data from the \modelnet{10} data set \cite{modelnet2015}.
The computations were run on a computing cluster on $4$ Xeon SP Gold $2.6$GHz CPU cores with $8$GB of RAM per core.

\subsection{Synthetic data}
\label{sec:synth}

\textbf{Dataset.} Our synthetic data set consists of samplings of unions of circles in the Euclidean plane $\R^2$. These unions are made of either one, two, or three circles, and we use the number of circles as the labels of the point clouds. Each point cloud has $N=600$ points, and $N-K=200$ corrupted points, i.e., $\lambda=1/3$, when noise is added. 

We train a RipsNet architecture $\RipsNet_{\rm synth}$ on a data set $\train_1$ of $3300$ point clouds, using $3000$ point clouds for training and $300$ as a validation set. The persistence diagrams $\PDs_1$ were computed with Alpha filtration in dimension $1$ with $\gudhi$, and then vectorized into either the first $5$ normalized persistence landscapes of resolution $300$ each, leading to $1500$-dimensional vectors, or into normalized persistence images of resolution $50\times 50$, leading to $2500$-dimensional vectors. 
The hyperparameters of these vectorizations were estimated from the corresponding persistence diagrams: the landscape limits were computed as the min and max of the $x$ and $y$ coordinates of the persistence diagrams points, while the image limits were computed as the min and max of the $x$ and $y-x$ coordinates, respectively.
Moreover, the image bandwidth was estimated as the $0.2$-quantile of all pairwise distances between the birth-persistence transforms of the persistence diagram points, and the image weight was defined as $10\cdot {\rm tanh}(y-x)$. 

Our architecture $\RipsNet_{\rm synth}$ is structured as follows.
The permutation invariant operator is $\op=$ sum, $\phi_1$ is made up of three fully connected layers of $30$, $20$, and $10$ neurons with $\ReLU$ activations, $\phi_2$ consists of three fully connected layers of $50$, $100$, and $200$ neurons with $\ReLU$ activations, and a last layer with sigmoid activation. 
We used the mean squared error (MSE) loss with Adamax optimizer with $\epsilon=5\cdot 10^{-4}$, and early stopping after $200$ epochs with less than $10^{-5}$ improvement. 

Finally, we evaluate $\RipsNet_{\rm synth}$ and $\gudhi$ using default XGBoost classifiers $\class_{\RipsNet_{\rm synth}}^{\rm XGB}$ and 
$\class_{\gudhi}^{\rm XGB}$ from $\scikit$, trained on a data set $\train_2$ of $3000$ point clouds and tested on a clean test set $\test$ and a noisy test set $\noise{\test}$ of $300$ point clouds each.
In addition, we compare it against two DeepSets architecture baselines trained directly on the point clouds: one ($\DeepSet_1$) with four fully connected layers of $50$, $30$, $10$, and $3$ neurons, and a simpler one ($\DeepSet_2$) with just two fully connected layers of $50$ and $3$ neurons. Both the architectures have $\ReLU$ activations, except for the last layer, permutation invariant operator $\op=$ sum, default Adam optimizer, and cross entropy loss from $\tensorflow$, and early stopping after $200$ epochs with less than $10^{-4}$ improvement.

\textbf{Results.} We show some point clouds of $\test$ and $\noise{\test}$, as well as their corresponding vectorized persistence diagrams and estimated vectorizations with $\RipsNet_{\rm synth}$, in Figure~\ref{fig:synth}. 
Accuracies and running times (averaged over 10 runs) are given in Tables~\ref{tab:results} and~\ref{tab:synth_time}.

\begin{figure}[t]
    \centering
    \includegraphics[width=0.19\textwidth]{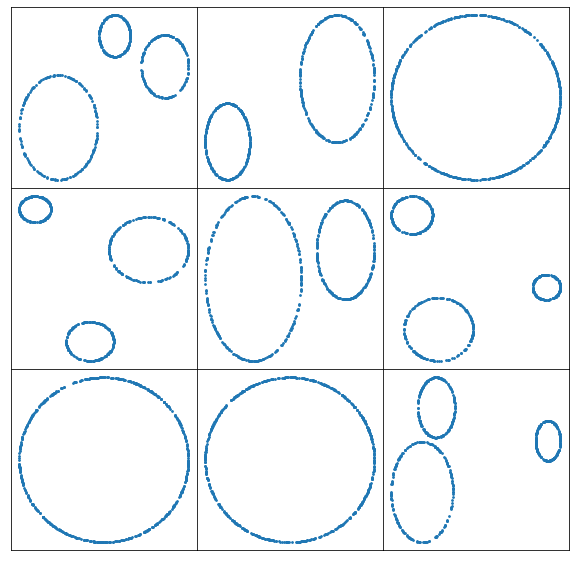}
    \includegraphics[width=0.19\textwidth]{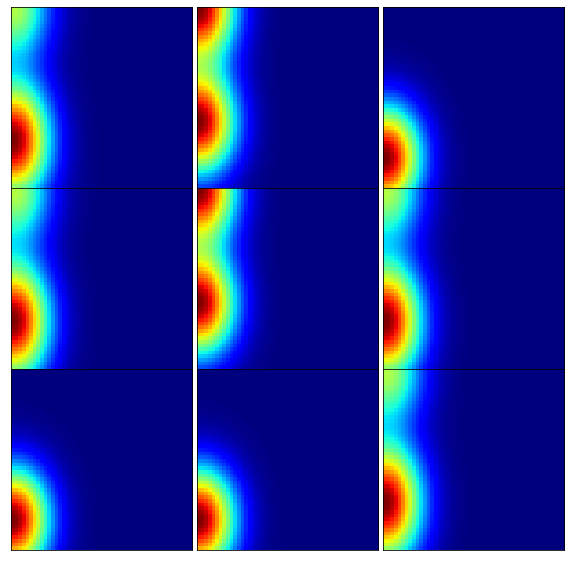}
    \includegraphics[width=0.19\textwidth]{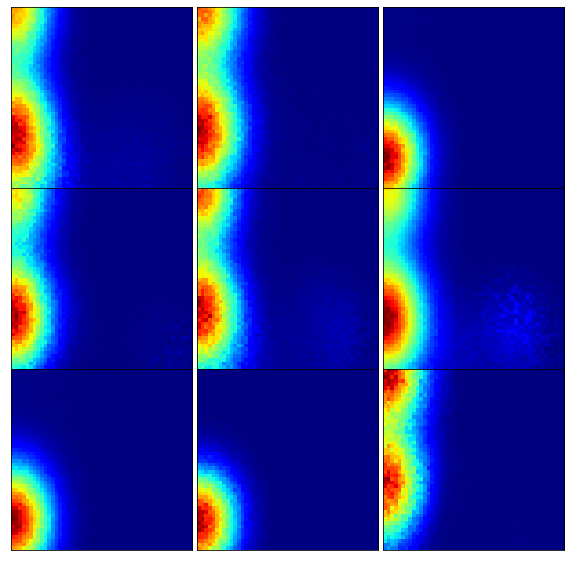}
    \includegraphics[width=0.19\textwidth]{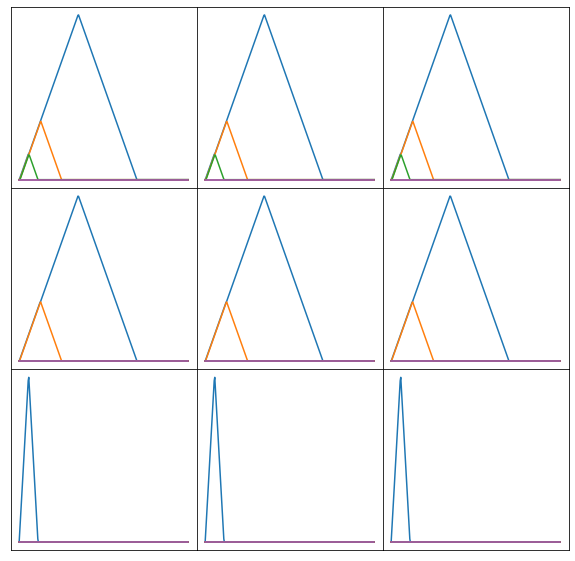}
    \includegraphics[width=0.19\textwidth]{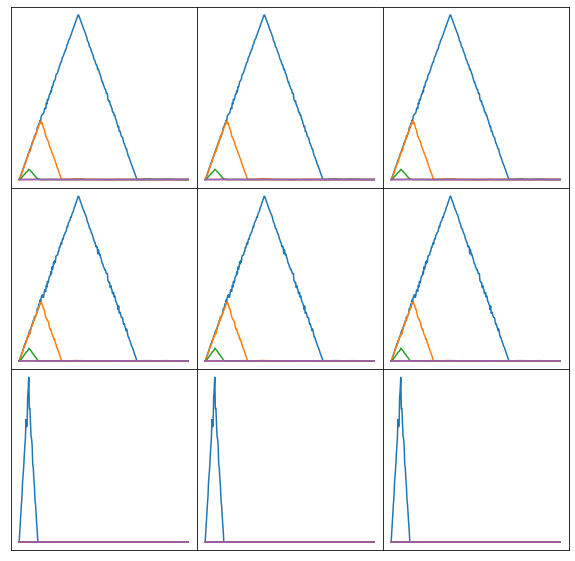}
    \includegraphics[width=0.19\textwidth]{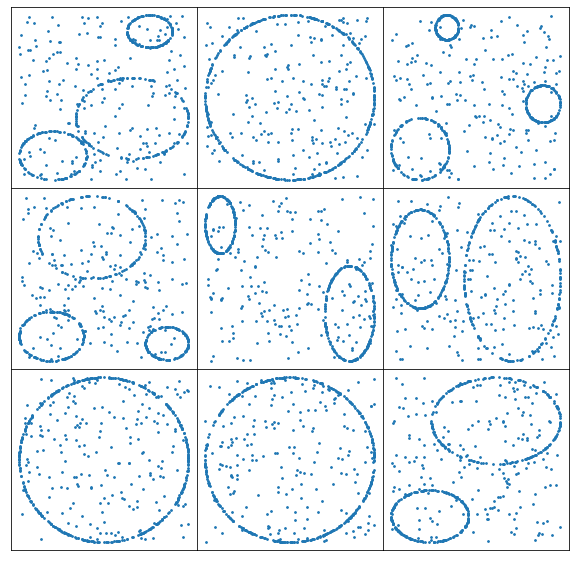}
    \includegraphics[width=0.19\textwidth]{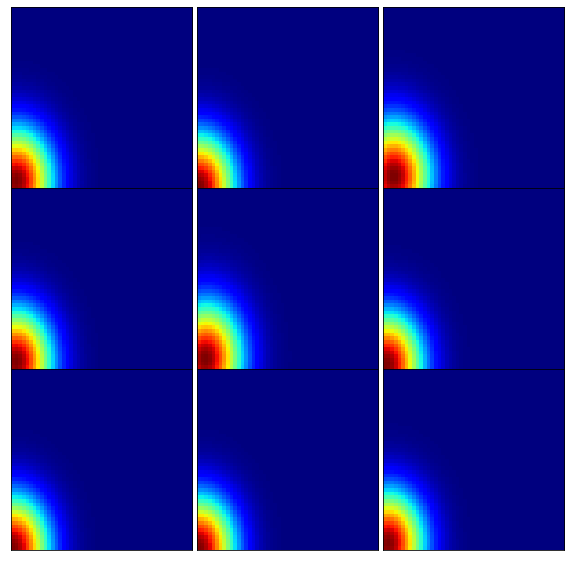}
    \includegraphics[width=0.19\textwidth]{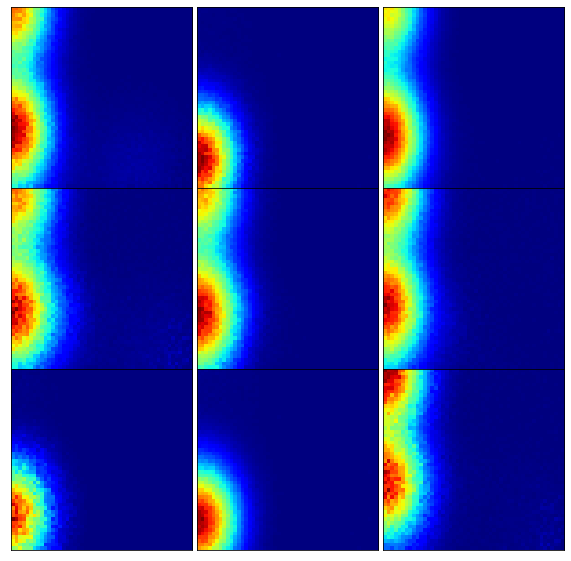}
    \includegraphics[width=0.19\textwidth]{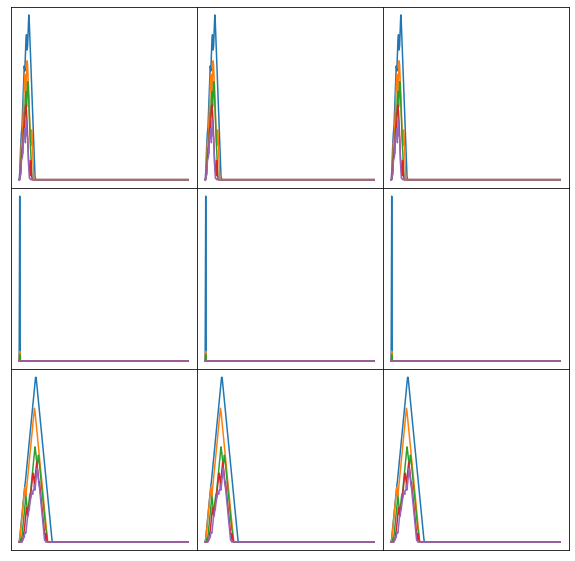}
    \includegraphics[width=0.19\textwidth]{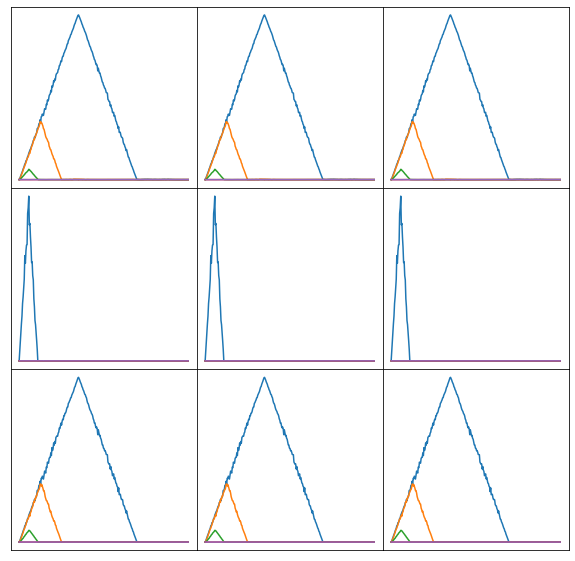}
    \caption{Point clouds (left), $\gudhi$ vectorizations (2nd and 4th columns) and $\RipsNet_{\rm synth}$ vectorizations (3rd and 5th columns) on clean data (1st row) and noisy data (2nd row) for PLs (4th and 5th columns) and PIs (2nd and 3rd columns).
    Larger version available in the Appendix (\cref{fig:synth_large}).
    }
    \label{fig:synth}
\end{figure}

\begin{table*}[]
    \centering
    \begin{tabular}{|c|ccc|cc|}
    \hline
        Synth. Data                & $\class_{\gudhi}^{\rm XGB}$ & $\class_{\dtm{\gudhi}}^{\rm XGB}$ & $\class_{\RipsNet}^{\rm XGB}$ & $\DeepSet_1$ & $\DeepSet_2$ \\                           
    \hline
    $\texttt{LS}$ & $99.9 \pm 0.1$ & $99.9 \pm 0.1$ & $80.7 \pm 3.0$ & $66.4 \pm 2.3$ & $66.0 \pm 2.4$ \\ 
$\texttt{PI}$ & $\color{red} \mathbf{100.0 \pm 0.0}$ & $\color{red} \mathbf{100.0 \pm 0.1}$ & $81.6 \pm 5.3$ & - & - \\

    \hline
    $\noise{\texttt{LS}}$ & $66.7 \pm 0.0$ & $66.7 \pm 0.0$ & $76.3 \pm 2.3$ & $66.8 \pm 1.0$ & $66.6 \pm 2.3$ \\ 
$\noise{\texttt{PI}}$ & $33.3 \pm 0.0$ & $65.0 \pm 1.3$ & $\color{red} \mathbf{77.4 \pm 4.4}$ & - & - \\ 

    \hline
        UCR Data                & $\class_{\gudhi}^{\rm XGB}$ & $\class_{\dtm{\gudhi}}^{\rm XGB}$ & $\class_{\RipsNet}^{\rm XGB}$ & $\knn_{\rm D}$ & $\knn_{\rm E}$ \\                           
    \hline
    $\texttt{P}$ & $70.5 \pm 0.0$ & $56.2 \pm 0.0$ & $\color{red} \mathbf{88.4 \pm 4.1}$ & $82.9 \pm 0.0$ & $78.1 \pm 0.0$ \\ 
$\noise{\texttt{P}}$ & $22.5 \pm 2.6$ & $\color{red} 53.9 \pm 2.5$ & $43.0 \pm 7.9$ & $\mathbf{82.9 \pm 0.0}$ & $78.1 \pm 0.6$ \\ 
$\texttt{SAIBORS2}$ & $63.6 \pm 0.0$ & $66.2 \pm 0.0$ & $\color{red} \mathbf{80.2 \pm 5.2}$ & $73.8 \pm 0.0$ & $72.4 \pm 0.0$ \\ 
$\noise{\texttt{SAIBORS2}}$ & $56.8 \pm 0.8$ & $60.0 \pm 1.2$ & $\color{red} \mathbf{75.6 \pm 6.6}$ & $73.7 \pm 0.9$ & $72.4 \pm 0.4$ \\ 
$\texttt{ECG5000}$ & $84.2 \pm 0.0$ & $86.2 \pm 0.0$ & $\color{red} 90.2 \pm 0.2$ & $\mathbf{93.0 \pm 0.0}$ & $92.8 \pm 0.0$ \\ 
$\noise{\texttt{ECG5000}}$ & $68.9 \pm 0.8$ & $71.6 \pm 1.0$ & $\color{red} 75.8 \pm 4.7$ & $\mathbf{93.1 \pm 0.3}$ & $92.8 \pm 0.1$ \\ 
$\texttt{UMD}$ & $55.6 \pm 0.0$ & $54.2 \pm 0.0$ & $\color{red} \mathbf{71.1 \pm 6.5}$ & $68.8 \pm 0.0$ & $61.1 \pm 0.0$ \\ 
$\noise{\texttt{UMD}}$ & $51.8 \pm 1.9$ & $48.9 \pm 1.6$ & $\color{red} \mathbf{69.2 \pm 6.4}$ & $68.3 \pm 1.7$ & $61.1 \pm 0.4$ \\ 
$\texttt{GPOVY}$ & $\color{red} 98.4 \pm 0.0$ & $97.8 \pm 0.0$ & $90.4 \pm 19.0$ & $\mathbf{100.0 \pm 0.0}$ & $100.0 \pm 0.0$ \\ 
$\noise{\texttt{GPOVY}}$ & $54.8 \pm 0.7$ & $54.3 \pm 0.6$ & $\color{red} 82.4 \pm 20.7$ & $\mathbf{100.0 \pm 0.0}$ & $100.0 \pm 0.0$ \\
    \hline
        $\lambda$ ($\%$)                          
        & $\class_{\gudhi}^{\scriptscriptstyle{\mathrm{NN}}}$ & $\class_{\dtm{\gudhi}}^{\scriptscriptstyle{\mathrm{NN}}}$     & $\class_{\RipsNet}^{\scriptscriptstyle{\mathrm{NN}}}$  & \texttt{pointnet} &  \\ 
    \hline
        $0$         & $30.4 \pm 4.0$    & $30.9 \pm 2.0$              & $\color{red} 53.9 \pm 2.4 $           & $\mathbf{81.6\pm 1.1}$ & \\
        $2$      & $30.3 \pm 3.2$    & $31.0 \pm 2.7$              & $\color{red} 53.2 \pm 2.5 $           & $\mathbf{74.5\pm 1.6}$  & \\
        $5$      & $29.9 \pm 4.0$    & $31.0 \pm 2.7$              & $\color{red} 55.1 \pm 3.3$            & $\mathbf{63.4\pm 1.6}$  & \\
        $10$       & $25.2 \pm 3.2$    & $29.5 \pm 3.1$              & $\color{red} \mathbf{51.0 \pm 2.1}$   & $50.6\pm 1.5$  & \\
        $15$      & $22.9 \pm 4.6$    & $25.7 \pm 3.1$              & $\color{red} \mathbf{46.9 \pm 3.0}$   & $44.9\pm 1.7$  & \\
        $25$      & $14.4 \pm 4.0$    & $18.1 \pm 2.6$              & $\color{red} \mathbf{42.6 \pm 2.5}$   & $11.0\pm 0.2$  & \\
        $50$       & $14.0 \pm 3.4$    & $13.1 \pm 1.9$              & $\color{red} \mathbf{31.6 \pm 3.3} $  & $10.9\pm 0.0$  & \\
        \hline
    \end{tabular}
    \caption{Accuracy scores of
    classifiers trained on $\gudhi$, $\dtm{\gudhi}$ and $\RipsNet$ PVs generated from several data sets. 
    The highest accuracy of the three topology-based classifiers (middle) 
    is highlighted in red, and the highest accuracy over all models
    in bold font.}
    \label{tab:results}
\end{table*}


\begin{table}[]
    \centering
    \begin{tabular}{|c|ccc|}
    \hline
        Data                & $\gudhi$ (s) & $\dtm{\gudhi}$ (s) & $\RipsNet$ (s) \\                           
    \hline
    $\texttt{LS}$ & $56.3$ $\pm$ $1.5$ & $155.9$ $\pm$ $8.1$ & $\mathbf{0.3 \pm 0.0}$ \\ 
$\texttt{PI}$ & $69.5$ $\pm$ $3.1$ & $173.7$ $\pm$ $13.3$ & $\mathbf{0.4 \pm 0.0}$ \\ 

    \hline
    $\texttt{P}$ & $5.3 \pm 1.4$ & $44.7 \pm 6.6$ & $ \mathbf{0.2 \pm 0.0}$ \\ 
    $\texttt{UMD}$ & $8.0 \pm 1.4$ & $55.7 \pm 3.6$ & $ \mathbf{0.2 \pm 0.0}$ \\ 
    \hline
    $\lambda= 2\%$ & $118.4\pm4.7$     & $178.5\pm8.1$     & $\mathbf{0.2 \pm 0.0}$   \\
    $\lambda= 5\%$ & $117.8\pm4.5$     & $180.0\pm9.2$     & $\mathbf{0.2 \pm 0.0}$   \\
    \hline
    \end{tabular}
    \caption{Running times for $\gudhi$, $\dtm{\gudhi}$ and $\RipsNet$.
    The bottom two rows refer to the 3D-shape experiments.}
    \label{tab:synth_time}
\end{table}

As one can see from the figure and the tables, $\RipsNet_{\rm synth}$ manages to learn features that look like reasonable PD vectorizations, and that perform reasonably well on clean data. However, features generated by $\RipsNet_{\rm synth}$ are much more robust;
even though $\class_{\gudhi}^{\rm XGB}$ and $\class_{\dtm{\gudhi}}^{\rm XGB}$ see their accuracies largely decrease when noise is added, $\class_{\RipsNet_{\rm synth}}^{\rm XGB}$ accuracy only decreases slightly. 
Note that the decrease of accuracy is more moderate for $\class_{\dtm{\gudhi}}^{\rm XGB}$ since $\dtm{\gudhi}$ is designed to be more robust to outliers.
Running times are much more favorable for $\RipsNet_{\rm synth}$, with an improvement of 2 (resp. 3)  orders of magnitude over $\gudhi$ (resp. $\dtm{\gudhi}$) both for persistence images and landscapes.

\subsection{Time series data}

\textbf{Data set.} We apply our experimental setup on several data sets from the $\ucr$ archive, which contains data sets of time series separated into train and test sets. We first converted the time series into point clouds in $\R^3$ using time-delay embedding with skip $1$ and delay $1$ with $\gudhi$, and used the first half of the train set for training
RipsNet architectures $\RipsNet$, while the second half was used for training XGBoost classifiers. The amount of corrupted points was set up as $2\%$, i.e.~$\lambda = 0.02$.

The hyperparameters were estimated exactly like for the synthetic data (see Section~\ref{sec:synth}), except that the final RipsNet architecture $\RipsNet_{\rm ucr}$ was found with $10$-fold cross-validation across several models similar to the one used in Section~\ref{sec:synth} and that persistence diagrams were computed in dimensions $0$ and $1$.
They were optimized with Adam optimizer with $\epsilon=5\cdot 10^{-4}$ and early stopping after $200$ epochs with less than $10^{-5}$ improvement.
We also focused on the first five persistence landscapes of resolution $50$ only. The baseline is made of two default $k$-nearest neighbors classifiers from $\scikit$, trained directly on the time series: one (named $\knn_{\rm E}$) computed with Euclidean distance, and one (named $\knn_{\rm D}$) computed with dynamic time warping. 

\textbf{Results.} Accuracies and running times (averaged over 10 runs) are given in Tables~\ref{tab:results} and~\ref{tab:synth_time}, and a more complete set of results (as well as the full data set names) can be found in Appendix~\ref{app:add_results}.
%
%
%
As in the synthetic experiment, $\RipsNet_{\rm ucr}$ learns valuable topological features, which often perform better than $\gudhi$
and $\dtm{\gudhi}$ on clean data, is more robust than $\gudhi$ and $\dtm{\gudhi}$ on noisy data, and is much faster to compute. 
The fact that $\RipsNet_{\rm ucr}$ often achieves better scores than $\gudhi$ and $\dtm{\gudhi}$ on clean data comes from the fact that the features learned by $\RipsNet_{\rm ucr}$ are more robust and less complex; while $\gudhi$ encodes all the topological patterns in the data, some of which are potentially due to noise, $\RipsNet_{\rm ucr}$ only retains the most salient patterns when minimizing the MSE during training.
Note however that when the training data set is too small for $\RipsNet$ to train properly, robustness can be harder to reach, as is the case for the \texttt{Plane} data set. 

\subsection{3D-shape data}
\textbf{Data set.}
In addition to investigating time series data, we ran experiments on Princeton's $\modelnet{10}$ data set, comprised of 3D-shape data of objects in 10 classes.
In order to obtain point clouds in $\R^3$, we sample $1024$ points on the surfaces of the 3D objects.
They are subsequently centered and normalized to be contained in the unit sphere.
We have $2393/598$ and $406/229$ training and test samples at our disposal for the training of RipsNet, 
and for the training of neural net classifiers, respectively.
The architecture of these classifiers ($\mathrm{NN}$) is very simple, consisting of only two consecutive fully connected layers of $100$ and $50$ neurons and an output layer.
In addition to the neural net classifiers, we also train XGBoost classifiers, the results of which, as well as additional results of the neural net classifiers and running times, are reported in Tables~\ref{tab:modelnet10_acc_NN_full}, \ref{tab:modelnet10_acc_xgb} and \cref{tab:modelnet10_time} in Appendix~\ref{app:add_results}.

For the sake of simplicity, we focus on persistence images of resolution $25\times 25$ with weight function $(y-x)^2$ only, and consider the combination of persistence diagrams of dimension $0$ and $1$.
The vectorization parameters were estimated as in Section~\ref{sec:synth} (due to computational cost, only on a random subset of all PDs).
The final RipsNet architecture, using $\op=$ mean, was found via a $3$-fold cross-validation over several models, and again optimized with Adam optimizer. 
As a baseline, we employ the \texttt{pointnet} model \cite{pointnet}\footnote{For an implementation of \texttt{pointnet} see: \url{https://github.com/keras-team/keras-io/blob/master/examples/vision/pointnet.py}}.
To showcase the robustness of $\RipsNet$, we introduce noise fractions $\lambda$ in $\{0.02, 0.05, 0.1, 0.25, 0.5\}$.

\textbf{Results.}
The accuracies of the $\mathrm{NN}$ classifier are compiled in \cref{tab:results} and some running times can be found in \cref{tab:synth_time}.
Due to class imbalances, the accuracy of the best possible \emph{constant} classifier is $22.2\%$.
As the sampling of the point clouds, as well as the addition of noise, are random, we repeat this process $10$ times in total.
Subsequently, we train the classifiers on each of these data sets, without retraining RipsNet, and report the mean and standard deviation.
The vectorization running time of $\gudhi$ is clearly outperformed by $\RipsNet$ by three orders of magnitude.
The accuracy of $\class_{\RipsNet}^{\scriptscriptstyle{\mathrm{NN}}}$ substantially surpasses those of $\class_{\gudhi}^{\scriptscriptstyle{\mathrm{NN}}}$ and $\class_{\dtm{\gudhi}}^{\scriptscriptstyle{\mathrm{NN}}}$ for all values of $\lambda$ and remains much more robust for high levels of noise.
For $\lambda \geq 0.1$, $\class_{\RipsNet}^{\scriptscriptstyle{\mathrm{NN}}}$ surpasses the \texttt{pointnet} baseline, whose accuracy decreases sharply for $\lambda \geq 0.25$, at which point $\class_{\RipsNet}^{\scriptscriptstyle{\mathrm{NN}}}$ substantially outperforms \texttt{pointnet}.

\section{Conclusion}
The computational complexity of the exact computation of persistence diagrams and their sensitivity to outliers and noise limit their applicability.
The vectorization of topological features of data sets is of central importance for their practical use in machine learning applications.
To address these limitations, we propose RipsNet, a Deep Sets-like architecture, that learns to estimate persistence vectorizations of point cloud data.
We prove theoretical results, showing that RipsNet is more robust to outliers and noise than exact vectorization.
Moreover, we substantiate our theoretical findings by numerical experiments on a synthetic data set, as well as on real-world time series data and 3D-shape data.
Thereby we show the robustness advantages and significant improvement in running times of RipsNet.
As inherent to machine learning frameworks, RipsNet is dependent on a successful training stage and hyperparameter tuning.
We envision that our work will find its applications in settings where pertinent topological features are obscured by the presence of noise or outliers in exact computations.
RipsNet is better suited to retain representations of such salient patterns in its estimations, due to its demonstrated robustness properties.
Another area of application is in situations where, due to computational cost, vectorizations of topological features are infeasible to compute exactly.

\paragraph*{Acknowledgments.}

The authors are grateful to the OPAL infrastructure from Université Côte d'Azur for providing resources and support.

\clearpage
\bibliography{arxiv}
\bibliographystyle{alpha}

\clearpage
\appendix
\onecolumn

\section{Fundamentals of Topological Data Analysis}\label{app:filtrations}

In this section, we briefly explain some fundamentals of topological data analysis.

\subsection{Simplicial complexes and homology groups}
Let us begin by introducing the concepts of simplicial complexes and homology groups.

\begin{definition}\label{simplicialcomplex}
    Let $V$ be a finite set. 
    A subset $K$ of the power set $P(V)$ is said to be a (finite) \emph{simplicial complex} with vertex set $V$ if it satisfies the following conditions.
    \begin{enumerate}
        \item[(1)] $\emptyset \not\in K$;
        \item[(2)] for any $v \in V$, $\{v\} \in K$;\label{def:simplex-vertices}
        \item[(3)] if $\sigma \in K$ and $\emptyset \neq \tau \subset \sigma$, then $\tau \in K$.
    \end{enumerate}
    An element $\sigma$ of $K$ with the cardinality $\# \sigma=k+1$ is called a $k$-simplex.
\end{definition}

Now we introduce homology to extract topology information of simplical complexes.
For $\sigma = \{v_{0}, \dots, v_{k}\}$, we consider orderings with respect to its vertices. 
Two orderings of $\sigma$ are said to be \emph{equivalent} if one ordering can be obtained from the other by an even permutation.
In this way, orderings consist of two equivalence classes, each of which is called an \emph{orientation} of $\sigma$. 
A simplex equipped with an orientation is said to be oriented, and a simplicial complex whose simplices are all oriented is called an oriented simplicial complex.
The equivalent class of the ordering $(v_{i_0}, \dots, v_{i_k})$ is denoted by $\langle v_{i_0},\dots, v_{i_k} \rangle$.
We use the convention that 
\[
    \langle v_{s(0)}, \dots, v_{s(k)} \rangle = \operatorname{sgn}(s) \langle v_{0}, \dots, v_{k} \rangle
\]
for any permutation $s$ of $\{0,\dots,k\}$, where $\operatorname{sgn}(s)$ denotes the signature of $s$.
For an oriented simplicial complex $K$, let $C_{i}(K)$ be the free abelian group consists of equivalence classes of oriented $i$-simplices of $K$, which is called the $i$-th $\emph{chain group}$ of $K$.
We now introduce a homomorphism, which is called the \emph{boundary operator}.

\begin{definition}
Let $K$ be an oriented simplicial complex and $i$ be a positive integer.
For a $i$-simplex $\sigma \in C_{i}(K)$, one defines the boundary operator $\partial_{i}\colon C_{i}(K)\to C_{i-1}(K)$ by 
\begin{equation*}
    \partial_{i}(\sigma) = \sum_{j=0}^{k}(-1)^{j} \langle v_{0} \dots v_{j-1}v_{j+1}\dots v_{n}\rangle.
\end{equation*}
Then one linearly extends the operator for the elements of $C_{i}(K)$.
One also sets $\partial_{0}=0$.
\end{definition}

Then we can see that $\partial_{i}\circ\partial_{i+1}=0$ for any non-negative integer.
This implies $\operatorname{Im} (\partial_{i+1})\subset \operatorname{Ker}(\partial_{i})$.

\begin{definition}
    For an oriented simplicial complex $K$, one defines  
    \[
        H_{i}(K)\coloneqq \operatorname{Ker}(\partial_{i})/\operatorname{Im}(\partial_{i+1})
    \]
    and calls it the $i$-th \emph{homology group} of $K$.
\end{definition}

We note that a simplicial complex defined in Definition~\ref{simplicialcomplex} is sometimes called an \emph{abstract simplicial complex}.
As explained in Definition \ref{simplicialcomplex}, a simplicial complex consists of a finite set and its power set.
We remark that the finite set $V$ does not need to be a subset of Euclidean space.
On the other hand, we can interpret the simplicial complex from a geometric perspective.
Let $K$ be a simplicial complex and $V$ the vertex set of $K$.
Set $N \coloneqq \# V$ and consider the $N$-dimensional vector space $\R^N$ with standard basis $e_i \coloneqq (\underbrace{0,  \dots, 0}_{i-1}, 1, 0, \dots, 0)$. 
With each simplex $\sigma = \{v_{i_0}, \dots, v_{i_k}\}$ of $K$, we associate a $k$-simplex 
\[
    |\sigma| \coloneqq \{\lambda_{i_0}e_{i_0}+\dots+\lambda_{i_k}e_{i_k}\mid\lambda_{i_0}+\dots+\lambda_{i_k}=1, \lambda_{i}\geq0\}.
\]
We define the \emph{geometric realization} of $K$ by $|K| \coloneqq \bigcup_{\sigma \in K} |\sigma| \subset \R^N$ with the subspace topology.
A fundamental theorem is that the simplicial homology $K$ is isomorphic to the singular homology of its geometric realization $|K|$: $H_n(K) \simeq H_n(|K|)$ for any $n \in \Z$.
Hence, for a topological space $\XX$, if we find a simplical complex $K$ such that its geometric realization $|K|$ is homotopy equivalent to $\XX$, we can compute the singular homology $H_n(\XX)$ by the combinatorial object $K$. 
Below, we will construct such a complex for a finite union of closed balls. 

\subsection{Filtrations and persistent homology}

We introduce the notion of a filtration of a simplicial complex to consider the evolution of the topology of the simplicial complex. 

\begin{definition}\label{def:filtration}
    Let $K$ be a simplicial complex.
    A family of subcomplexes $(K_\alpha)_{\alpha \in \R}$ of $K$ is said to be a \emph{filtration} of $K$ if it satisfies 
    \begin{enumerate}
        \item[(1)] $K_\alpha \subset K_{\alpha'}$ for $\alpha \le \alpha'$;
        \item[(2)] $\bigcup_{\alpha \in \R} K_\alpha=K$. 
    \end{enumerate}
\end{definition}

\textbf{\v{C}ech filtration.} Let $X$ be a finite point set in $\XX=\R^d$ and $\alpha \in \R$. 
To $X$, one can associate a function $d_X \colon \R^d \to \R, v \mapsto d(v,X):=\min_{x \in X} \|v-x\|$. 
The sublevel filtration induced by $d_X$ is frequently used to investigate the topology of the point set $X$.
Indeed, for any $\alpha \ge 0$ the sublevel set $\XX_\alpha=\{ x \in \R^d : d_X(x) \le \alpha\}$ is equal to the union of $d$-dimensional closed balls of radius $\alpha$ centered at points in $X$: $\bigcup_{x \in X} \overline{B}(x;\alpha)$.
One can compute the homology of the sublevel set in the following combinatorial way, by constructing a simplicial complex whose geometric realization is homotopy equivalent to the sublevel set. 
For $\alpha \ge 0$, we define a simplicial complex $\CC(X;\alpha)$ by 
\[
    \{x_0,\dots,x_k\} \in \CC(X;\alpha) \vcentcolon\iff \bigcap_{i=0}^{k} \overline{B}(x_i;\alpha) \neq \emptyset.
\]
In other words, $\CC(X;\alpha)$ is the nerve of the family of closed sets $\{ \overline{B}(x;\alpha) \}_{x \in X}$.
Since each $\overline{B}(x;\alpha)$ is a convex closed set of $\R^d$, the geometric realization of $\CC(X;\alpha)$ is homotopy equivalent to $\bigcup_{x \in X} \overline{B}(x;\alpha)$ by the nerve theorem. 
The family $(\CC(X;\alpha))_\alpha$ forms a filtration in the sense of \cref{def:filtration} for $\alpha \geq 0$.
If $\alpha$ is negative, we regard $\CC(X;\alpha)$ as the empty set.
We call this filtration the \emph{\v{C}ech filtration}.

In the above construction of filtrations, the radii of balls increase uniformly.
We can give filtrations in another way, that is, we make radii increase non-uniformly.
Such a filtration is called a \emph{weighted filtration}.
Let $X \subset \mathbb{R}^{d}$ be a finite point set, $f \colon \mathbb{R}^{d} \to \mathbb{R}_{\geq 0}$ a continuous function, and $p \in [1,\infty]$.
For $p <\infty$, we define a function 
$r_f \colon X \times \mathbb{R}_{\geq 0} \to \mathbb{R} \cup \{-\infty\}$ by 
\[
r_{f}(x,t) = 
\begin{cases}
    -\infty & \text{if }t < f(x),\\
    (t^{p}-f(x)^{p})^{\frac{1}{p}} & \text{otherwise.}
\end{cases}
\]
When $p=\infty$, we also define a function $r_f \colon X \times \mathbb{R}_{\geq 0} \to \mathbb{R} \cup \{-\infty\}$ by 
\[
r_{f}(x,t) =
\begin{cases}
    -\infty & \text{if }t < f(x),\\
t & \text{otherwise.}
\end{cases}
\]
We replace the radius of each closed ball $\overline{B}(x;r)$ by $r_{f}(x,t)$.
By modifying the definition of $\CC(X;\alpha)$, we define a simplicial complex $\CC_{f}(X;t)$ by 
\[
    \{x_0,\dots,x_k\} \in \CC_{f}(X;t) \vcentcolon \iff \bigcap_{i=0}^{k} \overline{B}(x_i;r_{f}(x_{i},t)) \neq \emptyset.
\]
Then we have a filtration $\{\CC_{f}(X;t)\}_t$, which is called the \emph{weighted \v{C}ech filtration}.

\textbf{Rips filtration.} The \v{C}ech complex $\CC(X;\alpha)$ exactly computes the homology of the union of closed balls $\bigcup_{x \in X} \overline{B}(x;\alpha)$, but it is computationally expensive in practice. 
Now we introduce another simplicial complex that is less expensive than the \v{C}ech complex.

Let $X$ be a finite point set in $\R^d$.
For any $\alpha \geq 0$, one can define a simplicial complex $\mathcal{R}(X;\alpha)$ whose vertex set is $X$ by 
\begin{align*}
    \{x_0,\dots,x_k\} \in \mathcal{R}(X;\alpha) 
    & \vcentcolon\iff
    \overline{B}(x_i;\alpha) \cap \overline{B}(x_j;\alpha) \neq \emptyset \ \text{for any $i,j \in \{0,\dots,k\}$}\\
    & \iff \|x_i-x_k\| \le 2\alpha \ \text{for any $i,j \in \{0,\dots,k\}$}.
\end{align*}
Otherwise, we regard $\mathcal{R}(X;\alpha)$ as the empty set.
The family $(\mathcal{R}(X;\alpha))_{\alpha}$ forms a filtration, which we call the \emph{Rips filtration}.

Remark that we can also construct the \emph{weighted Rips filtration} similarly to the weighted \v{C}ech filtration.

\textbf{Alpha filtration.} Let $X$ be a finite subset of $\R^d$ and assume that $X$ is in a general position.
For $x \in X$, set 
\[
    V_x \coloneqq \{ y \in \R^d \mid d(x,y) \le d(x',y), \text{for any $x' \in X \setminus \{x\}$} \},
\]
which we call the Voronoi cell for $x$.
With this notation, we set
\[
    W(x;\alpha) \coloneqq B(x;\alpha) \cap V_x, \qquad F(X;\alpha) \coloneqq \{ W(x;\alpha) \}_{x \in X}
\]
and define $\mathrm{Alpha}(X;\alpha)$ to be the nerve of $F(X;\alpha)$.
Then by the nerve theorem, we find that the geometric realization of $\mathrm{Alpha}(X;\alpha)$ is homotopy equivalent to $\bigcup_{x \in X} \overline{B}(x;\alpha)$.
The family $(\mathrm{Alpha}(X;\alpha))_{\alpha}$ forms a filtration and is called the alpha filtration.
Similarly, one can construct a weighted version of the alpha filtration, called the \emph{weighted alpha filtration}.

\textbf{DTM-filtrations.} To get robustness to noise and outliers, \cite{Anai2019} introduces DTM-based filtrations. 
Let $\mu$ be a probability measure on $\mathbb{R}^d$ and $m$ a parameter in $[0,1)$.
We define a function $\delta_{\mu,m} \colon \mathbb{R}^{d} \to \mathbb{R}$ by $\delta_{\mu,m}(x) = \text{inf}\{r \geq 0, \mu(\overline{B}(x;r)) > m\}$.
\begin{definition}
    The distance-to-measure function (DTM for short) $\mu$ is the function $d_{\mu,m} \colon \mathbb{R}^{d} \to \mathbb{R}$ defined by $$d_{\mu,m}(x) = \sqrt{\frac{1}{m}\int_{0}^{m}\delta_{\mu,t}^{2}(x)dt}.$$
\end{definition}

Let $X$ be a finite point cloud with $n$ points and set $\mu_X$ to be a empirical measure associated with $X$: $\mu_X=\frac{1}{n} \sum_{x \in X} \delta_{x}$, where $\delta_x$ is the Dirac measure located at $x$.
For fixed $p \in [0,+\infty]$ and $m \in [0,1)$, the family of simplicial complexes $\CC_{d_{\mu_X, m}}(X;t)$ defines a filtration, which we call the \emph{DTM-based filtration}.
By replacing the \v{C}ech filtration with the alpha filtration, we can also define the AlphaDTM-based filtration.
These filtrations are shown to be robust to noise and outliers.

\textbf{Persistent homology.}
Given a filtration $(K_{\alpha})_{\alpha \in \mathbb{R}}$ of an $n$-dimensional simplicial complex $K$, we have inclusion maps $\iota_{\alpha}^{\alpha'} \colon K_{\alpha} \hookrightarrow K_{\alpha'}$
for any $\alpha \leq \alpha'$.
Such inclusion maps induce
homomorphisms $(\iota_{\alpha}^{\alpha'})_{\ast} \colon H_{i}(K_{\alpha}) \to H_{i}(K_{\alpha'})$.
Then we have a family of homomorphisms $\cdots \to H_{i}(K_{\alpha}) \to H_{i}(K_{\alpha'})\to \cdots$.
The resulting family is called $i$-th \emph{persistence module}, and is known to decompose into simpler interval modules, that can be represented as a persistence diagram.

\section{Additional Experimental Results and Larger Figures} \label{app:add_results}

\begin{table*}[]
    \centering
    \begin{tabular}{|c|ccc|cc|}
    \hline
        Data                & $\class_{\gudhi}^{\rm XGB}$ & $\class_{\dtm{\gudhi}}^{\rm XGB}$ & $\class_{\RipsNet}^{\rm XGB}$ & $\knn_{\rm D}$ & $\knn_{\rm E}$ \\                           
    \hline
    $\texttt{CC}$ & $53.4 \pm 0.0$ & $52.0 \pm 0.0$ & $\color{red} \mathbf{55.8 \pm 1.3}$ & $55.2 \pm 0.0$ & $53.2 \pm 0.0$ \\ 
$\noise{\texttt{CC}}$ & $52.9 \pm 0.6$ & $\color{red} 53.7 \pm 0.2$ & $45.3 \pm 2.0$ & $\mathbf{54.5 \pm 0.7}$ & $51.7 \pm 0.5$ \\ 
$\texttt{PPTW}$ & $76.1 \pm 0.0$ & $71.2 \pm 0.0$ & $\color{red} 76.5 \pm 1.1$ & $\mathbf{78.5 \pm 0.0}$ & $76.1 \pm 0.0$ \\ 
$\noise{\texttt{PPTW}}$ & $\color{red} 70.7 \pm 2.3$ & $68.0 \pm 2.1$ & $62.0 \pm 7.6$ & $\mathbf{78.2 \pm 1.0}$ & $73.1 \pm 1.6$ \\ 
$\texttt{P}$ & $70.5 \pm 0.0$ & $56.2 \pm 0.0$ & $\color{red} \mathbf{88.4 \pm 4.1}$ & $82.9 \pm 0.0$ & $78.1 \pm 0.0$ \\ 
$\noise{\texttt{P}}$ & $22.5 \pm 2.6$ & $\color{red} 53.9 \pm 2.5$ & $43.0 \pm 7.9$ & $\mathbf{82.9 \pm 0.0}$ & $78.1 \pm 0.6$ \\ 
$\texttt{GP}$ & $80.7 \pm 0.0$ & $\color{red} \mathbf{84.0 \pm 0.0}$ & $75.7 \pm 6.6$ & $66.7 \pm 0.0$ & $72.7 \pm 0.0$ \\ 
$\noise{\texttt{GP}}$ & $50.0 \pm 0.7$ & $50.5 \pm 0.5$ & $\color{red} 68.1 \pm 5.3$ & $66.8 \pm 0.3$ & $\mathbf{72.8 \pm 0.5}$ \\ 
$\texttt{POC}$ & $68.2 \pm 0.0$ & $64.2 \pm 0.0$ & $\color{red} 71.6 \pm 2.8$ & $72.6 \pm 0.0$ & $\mathbf{74.6 \pm 0.0}$ \\ 
$\noise{\texttt{POC}}$ & $\color{red} 58.6 \pm 0.0$ & $57.4 \pm 0.0$ & $53.2 \pm 0.0$ & $71.8 \pm 0.0$ & $\mathbf{74.2 \pm 0.0}$ \\ 
$\texttt{SAIBORS2}$ & $63.6 \pm 0.0$ & $66.2 \pm 0.0$ & $\color{red} \mathbf{80.2 \pm 5.2}$ & $73.8 \pm 0.0$ & $72.4 \pm 0.0$ \\ 
$\noise{\texttt{SAIBORS2}}$ & $56.8 \pm 0.8$ & $60.0 \pm 1.2$ & $\color{red} \mathbf{75.6 \pm 6.6}$ & $73.7 \pm 0.9$ & $72.4 \pm 0.4$ \\ 
$\texttt{PPOAG}$ & $78.5 \pm 0.0$ & $79.5 \pm 0.0$ & $\color{red} 81.1 \pm 2.8$ & $\mathbf{82.9 \pm 0.0}$ & $82.9 \pm 0.0$ \\ 
$\noise{\texttt{PPOAG}}$ & $\color{red} 74.6 \pm 1.9$ & $72.7 \pm 1.5$ & $73.9 \pm 3.6$ & $82.0 \pm 0.4$ & $\mathbf{83.1 \pm 1.3}$ \\ 
$\texttt{ECG5000}$ & $84.2 \pm 0.0$ & $86.2 \pm 0.0$ & $\color{red} 90.2 \pm 0.2$ & $\mathbf{93.0 \pm 0.0}$ & $92.8 \pm 0.0$ \\ 
$\noise{\texttt{ECG5000}}$ & $68.9 \pm 0.8$ & $71.6 \pm 1.0$ & $\color{red} 75.8 \pm 4.7$ & $\mathbf{93.1 \pm 0.3}$ & $92.8 \pm 0.1$ \\ 
$\texttt{ECG200}$ & $\color{red} 77.0 \pm 0.0$ & $70.0 \pm 0.0$ & $76.2 \pm 1.6$ & $78.0 \pm 0.0$ & $\mathbf{85.0 \pm 0.0}$ \\ 
$\noise{\texttt{ECG200}}$ & $\color{red} 73.0 \pm 2.9$ & $70.2 \pm 4.0$ & $72.8 \pm 2.7$ & $78.4 \pm 0.5$ & $\mathbf{85.0 \pm 0.6}$ \\ 
$\texttt{MI}$ & $47.2 \pm 0.0$ & $46.8 \pm 0.0$ & $\color{red} 56.4 \pm 2.2$ & $\mathbf{63.4 \pm 0.0}$ & $52.6 \pm 0.0$ \\ 
$\noise{\texttt{MI}}$ & $35.1 \pm 2.2$ & $36.6 \pm 2.0$ & $\color{red} 44.3 \pm 3.4$ & $\mathbf{63.6 \pm 0.3}$ & $53.6 \pm 0.5$ \\ 
$\texttt{PC}$ & $69.4 \pm 0.0$ & $74.4 \pm 0.0$ & $\color{red} 87.6 \pm 6.0$ & $82.8 \pm 0.0$ & $\mathbf{97.8 \pm 0.0}$ \\ 
$\noise{\texttt{PC}}$ & $70.0 \pm 2.7$ & $72.7 \pm 2.4$ & $\color{red} 84.8 \pm 5.3$ & $84.2 \pm 1.4$ & $\mathbf{97.9 \pm 0.2}$ \\ 
$\texttt{DPOC}$ & $68.5 \pm 0.0$ & $69.2 \pm 0.0$ & $\color{red} \mathbf{74.0 \pm 2.0}$ & $73.9 \pm 0.0$ & $68.8 \pm 0.0$ \\ 
$\noise{\texttt{DPOC}}$ & $57.8 \pm 2.1$ & $58.7 \pm 1.7$ & $\color{red} 61.5 \pm 3.6$ & $\mathbf{73.9 \pm 0.5}$ & $72.0 \pm 0.9$ \\ 
$\texttt{IPD}$ & $70.4 \pm 0.0$ & $70.6 \pm 0.0$ & $\color{red} 79.0 \pm 0.8$ & $87.6 \pm 0.0$ & $\mathbf{96.4 \pm 0.0}$ \\ 
$\noise{\texttt{IPD}}$ & $70.4 \pm 0.0$ & $70.6 \pm 0.0$ & $\color{red} 79.0 \pm 0.8$ & $87.6 \pm 0.0$ & $\mathbf{96.4 \pm 0.0}$ \\ 
$\texttt{MPOAG}$ & $54.5 \pm 0.0$ & $50.0 \pm 0.0$ & $\color{red} 55.6 \pm 1.9$ & $\mathbf{56.5 \pm 0.0}$ & $52.0 \pm 0.0$ \\ 
$\noise{\texttt{MPOAG}}$ & $35.3 \pm 4.6$ & $33.0 \pm 2.8$ & $\color{red} 49.7 \pm 7.4$ & $\mathbf{56.0 \pm 0.5}$ & $53.2 \pm 2.5$ \\ 
$\texttt{SAIBORS1}$ & $53.0 \pm 0.0$ & $49.8 \pm 0.0$ & $\color{red} \mathbf{69.5 \pm 3.1}$ & $50.2 \pm 0.0$ & $45.0 \pm 0.0$ \\ 
$\noise{\texttt{SAIBORS1}}$ & $53.1 \pm 0.8$ & $49.5 \pm 0.7$ & $\color{red} \mathbf{67.9 \pm 5.8}$ & $50.2 \pm 0.2$ & $44.9 \pm 0.2$ \\ 
$\texttt{UMD}$ & $55.6 \pm 0.0$ & $54.2 \pm 0.0$ & $\color{red} \mathbf{71.1 \pm 6.5}$ & $68.8 \pm 0.0$ & $61.1 \pm 0.0$ \\ 
$\noise{\texttt{UMD}}$ & $51.8 \pm 1.9$ & $48.9 \pm 1.6$ & $\color{red} \mathbf{69.2 \pm 6.4}$ & $68.3 \pm 1.7$ & $61.1 \pm 0.4$ \\ 
$\texttt{TLECG}$ & $67.6 \pm 0.0$ & $71.8 \pm 0.0$ & $\color{red} \mathbf{78.6 \pm 11.0}$ & $75.8 \pm 0.0$ & $54.0 \pm 0.0$ \\ 
$\noise{\texttt{TLECG}}$ & $66.6 \pm 1.0$ & $\color{red} 69.6 \pm 0.9$ & $69.5 \pm 6.6$ & $\mathbf{74.7 \pm 0.4}$ & $54.2 \pm 0.3$ \\ 
$\texttt{MPOC}$ & $68.7 \pm 0.0$ & $66.7 \pm 0.0$ & $\color{red} 73.2 \pm 2.0$ & $74.2 \pm 0.0$ & $\mathbf{77.7 \pm 0.0}$ \\ 
$\noise{\texttt{MPOC}}$ & $64.7 \pm 1.4$ & $64.2 \pm 1.7$ & $\color{red} 65.0 \pm 2.6$ & $73.3 \pm 0.8$ & $\mathbf{76.7 \pm 0.9}$ \\ 
$\texttt{GPOVY}$ & $\color{red} 98.4 \pm 0.0$ & $97.8 \pm 0.0$ & $90.4 \pm 19.0$ & $\mathbf{100.0 \pm 0.0}$ & $100.0 \pm 0.0$ \\ 
$\noise{\texttt{GPOVY}}$ & $54.8 \pm 0.7$ & $54.3 \pm 0.6$ & $\color{red} 82.4 \pm 20.7$ & $\mathbf{100.0 \pm 0.0}$ & $100.0 \pm 0.0$ \\ 
$\texttt{MPTW}$ & $52.0 \pm 0.0$ & $\color{red} \mathbf{52.6 \pm 0.0}$ & $51.7 \pm 1.3$ & $50.6 \pm 0.0$ & $52.0 \pm 0.0$ \\ 
$\noise{\texttt{MPTW}}$ & $36.5 \pm 1.6$ & $35.2 \pm 3.5$ & $\color{red} 46.5 \pm 4.5$ & $51.0 \pm 1.3$ & $\mathbf{51.2 \pm 1.8}$ \\ 
$\texttt{CBF}$ & $\color{red} 64.4 \pm 0.0$ & $63.8 \pm 0.0$ & $61.6 \pm 10.9$ & $\mathbf{78.2 \pm 0.0}$ & $58.4 \pm 0.0$ \\ 
$\noise{\texttt{CBF}}$ & $\color{red} 63.8 \pm 1.1$ & $62.5 \pm 1.6$ & $55.0 \pm 9.3$ & $\mathbf{79.0 \pm 0.4}$ & $57.7 \pm 0.3$ \\ 

    \hline
    \end{tabular}
    \caption{Accuracy scores of XGBoost and $k$-NN classifiers on UCR data sets.}
    \label{tab:ucr_acc_long}
\end{table*}

\begin{table}[]
    \centering
    \begin{tabular}{|cc|ccc|}
    \hline
        Data                & Name & $\gudhi$ (s) & $\dtm{\gudhi}$ (s) & $\RipsNet$ (s) \\                           
    \hline
    $\texttt{CC}$ & $\texttt{ChlorineConcentration}$ & $33.6 \pm 9.3$ & $310.0 \pm 36.1$ & $\mathbf{0.2 \pm 0.0}$ \\ 
$\noise{\texttt{CC}}$ & - & $25.3 \pm 2.5$ & $298.2 \pm 43.6$ & $\mathbf{0.2 \pm 0.0}$ \\ 
$\texttt{PPTW}$ & $\texttt{ProximalPhalanxTW}$ & $5.8 \pm 1.3$ & $25.3 \pm 4.5$ & $ \mathbf{0.2 \pm 0.0}$ \\ 
$\noise{\texttt{PPTW}}$ & - & $5.9 \pm 1.3$ & $25.6 \pm 4.6$ & $ \mathbf{0.2 \pm 0.0}$ \\ 
$\texttt{P}$ & $\texttt{Plane}$ & $5.3 \pm 1.4$ & $44.7 \pm 6.6$ & $ \mathbf{0.2 \pm 0.0}$ \\ 
$\noise{\texttt{P}}$ & - & $5.3 \pm 1.5$ & $44.8 \pm 6.2$ & $ \mathbf{0.2 \pm 0.0}$ \\ 
$\texttt{GP}$ & $\texttt{GunPoint}$ & $6.6 \pm 1.5$ & $53.7 \pm 7.9$ & $ \mathbf{0.2 \pm 0.0}$ \\ 
$\noise{\texttt{GP}}$ & - & $6.6 \pm 1.5$ & $55.1 \pm 8.0$ & $ \mathbf{0.2 \pm 0.0}$ \\ 
$\texttt{POC}$ & $\texttt{PhalangesOutlineCorrect}$ & $22.0 \pm 5.1$ & $91.6 \pm 19.5$ & $ \mathbf{0.2 \pm 0.0}$ \\ 
$\noise{\texttt{POC}}$ & - & $11.9 \pm 0.0$ & $56.2 \pm 0.0$ & $ \mathbf{0.2 \pm 0.0}$ \\ 
$\texttt{SAIBORS2}$ & $\texttt{SonyAIBORobotSurface2}$ & $10.0 \pm 0.2$ & $26.3 \pm 0.6$ & $ \mathbf{0.2 \pm 0.0}$ \\ 
$\noise{\texttt{SAIBORS2}}$ & - & $10.2 \pm 0.2$ & $26.9 \pm 0.6$ & $ \mathbf{0.2 \pm 0.0}$ \\ 
$\texttt{PPOAG}$ & $\texttt{ProximalPhalanxOutlineAgeGroup}$ & $6.8 \pm 0.2$ & $28.9 \pm 0.8$ & $ \mathbf{0.2 \pm 0.0}$ \\ 
$\noise{\texttt{PPOAG}}$ & - & $6.9 \pm 0.2$ & $29.4 \pm 0.9$ & $ \mathbf{0.2 \pm 0.0}$ \\ 
$\texttt{ECG5000}$ & $\texttt{ECG5000}$ & $27.6 \pm 8.0$ & $187.1 \pm 29.2$ & $ \mathbf{0.2 \pm 0.0}$ \\ 
$\noise{\texttt{ECG5000}}$ & - & $27.2 \pm 8.6$ & $185.5 \pm 32.3$ & $ \mathbf{0.2 \pm 0.0}$ \\ 
$\texttt{ECG200}$ & $\texttt{ECG200}$ & $3.8 \pm 1.2$ & $14.9 \pm 4.3$ & $ \mathbf{0.1 \pm 0.0}$ \\ 
$\noise{\texttt{ECG200}}$ & - &  $3.8 \pm 1.1$ & $15.1 \pm 3.9$ & $\mathbf{0.2 \pm 0.0}$ \\ 
$\texttt{MI}$ & $\texttt{MedicalImages}$ & $16.8 \pm 2.5$ & $79.8 \pm 9.4$ & $ \mathbf{0.3 \pm 0.1}$ \\ 
$\noise{\texttt{MI}}$ & - & $16.5 \pm 2.4$ & $80.1 \pm 8.1$ & $ \mathbf{0.3 \pm 0.0}$ \\ 
$\texttt{PC}$ & $\texttt{PowerCons}$ & $11.2 \pm 2.6$ & $77.1 \pm 10.0$ & $ \mathbf{0.2 \pm 0.0}$ \\ 
$\noise{\texttt{PC}}$ & - & $11.3 \pm 2.8$ & $76.9 \pm 11.4$ & $ \mathbf{0.2 \pm 0.0}$ \\ 
$\texttt{DPOC}$ & $\texttt{DistalPhalanxOutlineCorrect}$ & $9.4 \pm 1.6$ & $39.2 \pm 5.3$ & $ \mathbf{0.2 \pm 0.0}$ \\ 
$\noise{\texttt{DPOC}}$ & - & $9.4 \pm 1.5$ & $39.4 \pm 5.1$ & $ \mathbf{0.2 \pm 0.0}$ \\ 
$\texttt{IPD}$ & $\texttt{ItalyPowerDemand}$ & $2.7 \pm 0.5$ & $3.9 \pm 0.6$ & $ \mathbf{0.2 \pm 0.0}$ \\ 
$\noise{\texttt{IPD}}$ & - & $2.7 \pm 0.5$ & $3.9 \pm 0.6$ & $\mathbf{0.2 \pm 0.0}$ \\ 
$\texttt{MPOAG}$ & $\texttt{MiddlePhalanxOutlineAgeGroup}$ & $5.8 \pm 1.0$ & $24.1 \pm 3.3$ & $ \mathbf{0.2 \pm 0.0}$ \\ 
$\noise{\texttt{MPOAG}}$ & - & $5.8 \pm 0.9$ & $24.1 \pm 3.2$ & $ \mathbf{0.2 \pm 0.0}$ \\ 
$\texttt{SAIBORS1}$ & $\texttt{SonyAIBORobotSurface2}$ & $10.4 \pm 1.5$ & $28.4 \pm 4.2$ & $ \mathbf{0.2 \pm 0.0}$ \\ 
$\noise{\texttt{SAIBORS1}}$ & - & $10.5 \pm 1.5$ & $28.6 \pm 4.2$ & $ \mathbf{0.2 \pm 0.0}$ \\ 
$\texttt{UMD}$ & $\texttt{UMD}$ & $8.0 \pm 1.4$ & $55.7 \pm 3.6$ & $ \mathbf{0.2 \pm 0.0}$ \\ 
$\noise{\texttt{UMD}}$ & - & $8.0 \pm 1.3$ & $55.6 \pm 2.6$ & $\mathbf{0.2 \pm 0.0}$ \\ 
$\texttt{TLECG}$ & $\texttt{TwoLeadECG}$ & $11.0 \pm 0.4$ & $40.8 \pm 1.5$ & $\mathbf{0.2 \pm 0.0}$ \\ 
$\noise{\texttt{TLECG}}$ & - & $11.1 \pm 0.3$ & $41.2 \pm 0.9$ & $ \mathbf{0.2 \pm 0.0}$ \\ 
$\texttt{MPOC}$ & $\texttt{MiddlePhalanxOutlineCorrect}$ & $9.6 \pm 2.1$ & $39.3 \pm 8.1$ & $\mathbf{0.2 \pm 0.0}$ \\ 
$\noise{\texttt{MPOC}}$ & - & $9.6 \pm 2.1$ & $39.4 \pm 7.9$ & $ \mathbf{0.2 \pm 0.0}$ \\ 
$\texttt{GPOVY}$ & $\texttt{GunPointOldVersusYoung}$ & $12.1 \pm 3.8$ & $108.1 \pm 21.5$ & $ \mathbf{0.2 \pm 0.0}$ \\ 
$\noise{\texttt{GPOVY}}$ & - & $12.3 \pm 3.8$ & $108.8 \pm 19.6$ & $ \mathbf{0.2 \pm 0.0}$ \\ 
$\texttt{MPTW}$ & $\texttt{MiddlePhalanxTW}$ & $6.3 \pm 0.2$ & $26.3 \pm 1.0$ & $ \mathbf{0.2 \pm 0.0}$ \\ 
$\noise{\texttt{MPTW}}$ & - & $6.4 \pm 0.2$ & $26.1 \pm 0.8$ & $ \mathbf{0.2 \pm 0.0}$ \\ 
$\texttt{CBF}$ & $\texttt{CBF}$ & $25.2 \pm 4.5$ & $120.4 \pm 10.1$ & $ \mathbf{0.2 \pm 0.0}$ \\ 
$\noise{\texttt{CBF}}$ & - & $25.0 \pm 4.4$ & $120.5 \pm 10.2$ & $\mathbf{0.3 \pm 0.1}$ \\ 

    \hline
    \end{tabular}
    \caption{Data set names and running times for $\gudhi$, $\dtm{\gudhi}$ and $\RipsNet_{\rm ucr}$ on UCR data sets.}
    \label{tab:ucr_time_long}
\end{table}

\begin{table*}[]
    \centering
    \begin{tabular}{|r|ccc|c|}
    \hline
        $\lambda$ ($\%$)                        & $\class_{\gudhi}^{\scriptscriptstyle{\mathrm{NN}}}$ & $\class_{\dtm{\gudhi}}^{\scriptscriptstyle{\mathrm{NN}}}$     & $\class_{\RipsNet}^{\scriptscriptstyle{\mathrm{NN}}}$  & \texttt{pointnet} \\ 
    \hline
        $0$         & $30.4 \pm 4.0$    & $30.9 \pm 2.0$              & $\color{red} 53.9 \pm 2.4 $           & $\mathbf{81.6\pm 1.1}$\\
        $2$      & $30.3 \pm 3.2$    & $31.0 \pm 2.7$              & $\color{red} 53.2 \pm 2.5 $           & $\mathbf{74.5\pm 1.6}$ \\
        $5$      & $29.9 \pm 4.0$    & $31.0 \pm 2.7$              & $\color{red} 55.1 \pm 3.3$            & $\mathbf{63.4\pm 1.6}$ \\
        $10$       & $25.2 \pm 3.2$    & $29.5 \pm 3.1$              & $\color{red} \mathbf{51.0 \pm 2.1}$   & $50.6\pm 1.5$ \\
        $15$      & $22.9 \pm 4.6$    & $25.7 \pm 3.1$              & $\color{red} \mathbf{46.9 \pm 3.0}$   & $44.9\pm 1.7$ \\
        $25$      & $14.4 \pm 4.0$    & $18.1 \pm 2.6$              & $\color{red} \mathbf{42.6 \pm 2.5}$   & $11.0\pm 0.2$ \\
        $50$       & $14.0 \pm 3.4$    & $13.1 \pm 1.9$              & $\color{red} \mathbf{31.6 \pm 3.3} $  & $10.9\pm 0.0$ \\
        $75$      & $11.3 \pm 1.5$    & $11.2 \pm 2.0$              & $\color{red} \mathbf{17.0 \pm 2.3} $  & $10.9\pm 0.0$ \\
        $90$       & $11.0 \pm 2.4$    & $10.8 \pm 3.1$              & $\color{red} \mathbf{12.8 \pm 2.8} $  & $10.9\pm 0.0$ \\
        \hline
    \end{tabular}
    \caption{Accuracy scores of simple neural net classifiers of $\gudhi$ and $\RipsNet$ on $\modelnet{10}$.
    $\lambda$ is the noise fraction and $(y-x)^2$ was used as persistence image weight function.
    The highest accuracy of the three topology based models $\class_{\gudhi}^{\scriptscriptstyle{\mathrm{NN}}}$, $\class_{\dtm{\gudhi}}^{\scriptscriptstyle{\mathrm{NN}}}$ and $\class_{\RipsNet}^{\scriptscriptstyle{\mathrm{NN}}}$ is highlighted in red, and the highest accuracy over all models, including the \texttt{pointnet} baseline, is highlighted in bold font.}
    \label{tab:modelnet10_acc_NN_full}
\end{table*}

\begin{table*}[]
    \centering
    \begin{tabular}{|r|ccc|c|}
    \hline
        $\lambda$ ($\%$)       & $\class_{\gudhi}^{\scriptscriptstyle{\mathrm{XGB}}}$ & $\class_{\dtm{\gudhi}}^{\scriptscriptstyle{\mathrm{XGB}}}$     & $\class_{\RipsNet}^{\scriptscriptstyle{\mathrm{XGB}}}$ & \texttt{pointnet} \\ 
    \hline
        $0$         & $32.2 \pm 2.8$    & $31.6 \pm 2.0$              & $\color{red} 49.1\pm 2.2 $  & $\mathbf{81.6\pm 1.1}$ \\
        $2$      & $31.0 \pm 4.9$    & $30.9 \pm 2.8$              & $\color{red} 48.3\pm 3.0 $  & $\mathbf{74.5\pm 1.6}$ \\
        $5$      & $30.4 \pm 2.6$    & $30.9 \pm 3.0$              & $\color{red} 48.0\pm 3.2 $  & $\mathbf{63.4\pm 1.6}$ \\
        $10$       & $28.3 \pm 2.0$    & $27.6 \pm 2.0$              & $\color{red} 46.0\pm 2.2 $  & $\mathbf{50.6\pm 1.5}$ \\
        $15$      & $26.6 \pm 2.8$    & $28.2 \pm 2.6$              & $\color{red} 43.3\pm 2.7 $  & $\mathbf{44.9\pm 1.7}$ \\
        $25$      & $21.6 \pm 2.9 $   & $25.6 \pm 2.0$              & $\color{red} \mathbf{40.7 \pm 2.8}$ & $        11.0\pm 0.2 $ \\
        $50$       & $15.3 \pm 2.0$    & $15.7 \pm 1.9$              & $\mathbf{\color{red} 27.8\pm 2.7} $  & $10.9\pm 0.0$ \\
        $75$      & $12.8 \pm 1.5$    & $11.9 \pm 1.1$              & $\mathbf{\color{red} 19.4\pm 1.6} $  & $10.9\pm 0.0$ \\
        $90$       & $13.0 \pm 2.1$    & $11.1 \pm 0.9$              & $\mathbf{\color{red} 13.1\pm 2.4} $  & $10.9\pm 0.0$ \\
        \hline
    \end{tabular}
    \caption{Accuracy scores of XGBoost classifiers of $\gudhi$ and $\RipsNet$ on $\modelnet{10}$.
    $\lambda$ is the noise fraction and $(y-x)^2$ was used as persistence image weight function.
    The highest accuracy of the three topology based models $\class_{\gudhi}^{\scriptscriptstyle{\mathrm{XGB}}}$, $\class_{\dtm{\gudhi}}^{\scriptscriptstyle{\mathrm{XGB}}}$ and $\class_{\RipsNet}^{\scriptscriptstyle{\mathrm{XGB}}}$ is highlighted in red, and the highest accuracy over all models, including the \texttt{pointnet} baseline, is highlighted in bold font.}
    \label{tab:modelnet10_acc_xgb}
\end{table*}


\begin{table}[]
    \centering
    \begin{tabular}{|r|ccc|}
        \hline
        $\lambda$ ($\%$) & $\gudhi$ (s)  & $\dtm{\gudhi}$ (s) & $\RipsNet$ (s)\\
        \hline
        $2$ & $118.4\pm4.7$     & $178.5\pm8.1$     & $\mathbf{0.2 \pm 0.0}$   \\
        $5$ & $117.8\pm4.5$     & $180.0\pm9.2$     & $\mathbf{0.2 \pm 0.0}$   \\
        $10$  & $117.5\pm4.6$     & $181.9\pm8.1$     & $\mathbf{0.2 \pm 0.0}$   \\
        $15$ & $120.0\pm4.7$     & $178.7\pm8.4$     & $\mathbf{0.3 \pm 0.0}$   \\
        $25$ & $121.2\pm4.4$     & $179.8\pm 7.8$    & $\mathbf{0.2 \pm 0.0}$  \\
        $50$  & $127.0\pm 6.4$    & $196.5\pm 10.7$    & $\mathbf{0.2 \pm 0.0}$   \\
        \hline
    \end{tabular}
    \caption{Running times on \modelnet{10} data in seconds, for $\gudhi$, $\dtm{\gudhi}$ and $\RipsNet$, respectively.}
    \label{tab:modelnet10_time}
\end{table}


\begin{sidewaysfigure}
    \centering
    \includegraphics[width=0.18\textwidth]{main/images/clean_gudhi_pc.png}
    \includegraphics[width=0.18\textwidth]{main/images/clean_gudhi_pi.png}
    \includegraphics[width=0.18\textwidth]{main/images/clean_ripsnet_pi.png}
    \includegraphics[width=0.18\textwidth]{main/images/clean_gudhi_pl.png}
    \includegraphics[width=0.18\textwidth]{main/images/clean_ripsnet_pl.png}
    
    \vspace{1cm}
    
    \includegraphics[width=0.18\textwidth]{main/images/noisy_gudhi_pc.png}
    \includegraphics[width=0.18\textwidth]{main/images/noisy_gudhi_pi.png}
    \includegraphics[width=0.18\textwidth]{main/images/noisy_ripsnet_pi.png}
    \includegraphics[width=0.18\textwidth]{main/images/noisy_gudhi_pl.png}
    \includegraphics[width=0.18\textwidth]{main/images/noisy_ripsnet_pl.png}
    \caption{Larger version of \cref{fig:synth}}
    \label{fig:synth_large}
\end{sidewaysfigure}


\end{document}